\newcounter{xcounter}
\newcounter{cntShader}
\def\couleur{black}
\newcommand{\dist}{\operatorname{dist}}
\newcommand{\diam}{\operatorname{diam}}
\def\mod{\operatorname{mod}}
\renewcommand{\O}{\mathcal{O}}
\newtheorem{thm}{Theorem}  \newtheorem{lem}[thm]{Lemma}
\newtheorem{cor}[thm]{Corollary}
\newtheorem{fac}[thm]{Fact}
\newtheorem{pro}[thm]{Proposition}
\newtheorem{defi}[thm]{Definition}
\numberwithin{thm}{section}
\newcommand{\thmref}[1]{Theorem~\ref{thm:#1}}
\newcommand{\proref}[1]{Proposition~\ref{pro:#1}}
\newcommand{\lemref}[1]{Lemma~\ref{lem:#1}}
\newcommand{\corref}[1]{Corollary~\ref{cor:#1}}
\newcommand{\defref}[1]{Definition~\ref{def:#1}}
\newcommand{\figref}[1]{Figure~\ref{fig:#1}}
\newcommand{\secref}[1]{Section~\ref{sec:#1}}
\newcommand{\secrefs}[2]{Sections~\ref{sec:#1} and~\ref{sec:#2}}
\newcommand{\facref}[1]{Fact~\ref{fac:#1}}
\newcommand{\eq}[1]{equation~\eqref{eq:#1}}
\newcommand{\eqss}[3]{equations~\eqref{eq:#1},~\eqref{eq:#2}, and~\eqref{eq:#3}}
\newcommand{\mypro}[2]{\begin{pro}\label{pro:#1}#2\end{pro}}
\newcommand{\againpronofrom}[2]{\noindent\textbf{\proref{#1}}\textbf{.}\emph{#2}\par}
\newcommand{\simplePr}{\ensuremath{\operatorname{\mathbf{Pr}}}}
\renewcommand{\Pr}[1]{\ensuremath{\operatorname{\mathbf{Pr}}\left[#1\right]}}
\newcommand{\Pro}[1]{\ensuremath{\operatorname{\mathbf{Pr}}\left[#1\right]}}
\newcommand{\Ex}[1]{\ensuremath{\operatorname{\mathbf{E}}\left[#1\right]}}
\newcommand{\avgx}{\ensuremath{\bar{x}}}
\newcommand{\cG}{\ensuremath{\mathcal{G}}}
\newcommand{\bP}{\ensuremath{\mathbf{P}}}
\newcommand{\bL}{\ensuremath{\mathbf{L}}}
\newcommand{\tP}{\ensuremath{\widetilde{\mathbf{P}}}}
\newcommand{\barP}{\ensuremath{\overline{\mathbf{P}}}}
\newcommand{\ff}{\ensuremath{\mathbf{f}}}
\newcommand{\FF}{\ensuremath{\mathcal{F}}}
\newcommand{\im}{\textswab{i}}
\newcommand{\Oh}{\mathcal{O}}
\newcommand{\ARR}{\mathsf{ARR}}
\newcommand{\Geo}{\mathsf{Geo}}
\newcommand{\tsum}{\textstyle\sum}
\newcommand{\polylog}{\operatorname{polylog}}
\DeclareSymbolFont{AMSb}{U}{msb}{m}{n}
\newcommand{\N}{{\mathbb{N}}}
\newcommand{\Z}{{\mathbb{Z}}}
\newcommand{\Znd}{{\mathbb{Z}^d_{\geq 3}}}
\newcommand{\Nnd}{{\mathbb{N}^d_{\geq 3}}}
\newcommand{\R}{{\mathbb{R}}}
 \newcommand{\FOCS}[2]{#1 Annual IEEE Symposium on Foundations of Computer Science (FOCS'#2)}
 \newcommand{\STOC}[2]{#1 Annual ACM Symposium on Theory of Computing (STOC'#2)}
 \newcommand{\SODA}[2]{#1 Annual ACM-SIAM Symposium on Discrete Algorithms (SODA'#2)}
 \newcommand{\PODC}[2]{#1 Annual ACM Principles of Distributed Computing (PODC'#2)}
\newcommand{\SPAA}[2]{#1 ACM Symposium on Parallel Algorithms and Architectures (SPAA'#2)}
\def\poly{\operatorname{poly}}
\let\oldsqrt\sqrt
\def\hksqrt{\mathpalette\DHLhksqrt}
\def\DHLhksqrt#1#2{\setbox0=\hbox{$#1\oldsqrt{#2\,}$}\dimen0=\ht0
   \advance\dimen0-0.2\ht0
   \setbox2=\hbox{\vrule height\ht0 depth -\dimen0}   {\box0\lower0.4pt\box2}}
\renewcommand\sqrt\hksqrt
\renewcommand{\leq}{\leqslant}
\renewcommand{\geq}{\geqslant}
\renewcommand\epsilon\varepsilon
\def\now{\minute=\time \hour=\time \divide \hour by 60 \hourMins=\hour \multiply\hourMins by 60
  \advance\minute by -\hourMins \zeroPadTwo{\the\hour}:\zeroPadTwo{\the\minute}}
\def\today{\the\year-\zeroPadTwo{\the\month}-\zeroPadTwo{\the\day}}
\def\zeroPadTwo#1{\ifnum #1<10 0\fi #1}
\DeclareRobustCommand{\qed}{\ifmmode\mathqed\else\leavevmode\unskip\penalty9999\hbox{}\nobreak\hfill\quad\hbox{\qedsymbol}\fi}
\let\QED@stack\@empty
\let\qed@elt\relax
\newcommand{\pushQED}[1]{\toks@{\qed@elt{#1}}\@temptokena\expandafter{\QED@stack}\xdef\QED@stack{\the\toks@\the\@temptokena}}
\newcommand{\popQED}{\begingroup\let\qed@elt\popQED@elt \QED@stack\relax\relax\endgroup}
\def\popQED@elt#1#2\relax{#1\gdef\QED@stack{#2}}
\newif\ifmeasuring@
\newif\iffirstchoice@ \firstchoice@true
\def\setQED@elt#1#2\relax{\ifmeasuring@\else \iffirstchoice@ \gdef\QED@stack{\qed@elt{}#2}\fi\fi#1}
\def\linebox@qed{\hfil\hbox{\qedsymbol}\hfilneg}
\def\math@qedhere{\@ifundefined{\@currenvir @qed}{\qed@warning\quad\hbox{\qedsymbol}}{\@xp\aftergroup\csname\@currenvir @qed\endcsname}}
\def\displaymath@qed{\relax\ifmmode\ifinner\aftergroup\linebox@qed\else\eqno\let\eqno\relax \let\leqno\relax \let\veqno\relax\hbox{\qedsymbol}\fi\else\aftergroup\linebox@qed\fi}
\let\csname equation*@qed\endcsname\displaymath@qed
\def\equation@qed{
  \iftagsleft@\hbox{\phantom{\quad\qedsymbol}}\gdef\alt@tag{\rlap{\hbox to\displaywidth{\hfil\qedsymbol}}\global\let\alt@tag\@empty}
  \else\gdef\alt@tag{\global\let\alt@tag\@empty\vtop{\ialign{\hfil####\cr\tagform@\theequation\cr\qedsymbol\cr}}\setbox\z@}
  \fi
}
\def\qed@tag{\global\tag@true \nonumber&\omit\setboxz@h {\strut@ \qedsymbol}\tagsleft@false\place@tag@gather\kern-\tabskip\ifst@rred \else \global\@eqnswtrue \fi \global\advance\row@\@ne \cr}
\def\split@qed{\def\endsplit{\crcr\egroup \egroup \ctagsplit@false \rendsplit@\aftergroup\align@qed}}
\def\align@qed{\ifmeasuring@ \tag*{\qedsymbol}\else \let\math@cr@@@\qed@tag\fi}
\let\csname align*@qed\endcsname\align@qed
\let\csname gather*@qed\endcsname\align@qed
\def\@tempb#1 v#2.#3\@nil{#2}
\def\@tempa{TT}\else\def\@tempa{TF}\fi
\@tempa\renewcommand{\math@qedhere}{\quad\hbox{\qedsymbol}}\fi
\newcommand{\openbox}{\leavevmode\hbox to.77778em{\hfil\vrule\vbox to.675em{\hrule width.6em\vfil\hrule}\vrule\hfil}}
\DeclareRobustCommand{\textsquare}{\begingroup\usefont{U}{msa}{m}{n}\thr@@\endgroup}
\providecommand{\qedsymbol}{\openbox}
\renewenvironment{proof}[1][\proofname]{\par\pushQED{\qed}\normalfont\topsep6\p@\@plus6\p@\relax\trivlist\item[\hskip\labelsep\itshape #1\@addpunct{.}]\ignorespaces}{\popQED\endtrivlist\@endpefalse}
\providecommand{\proofname}{Proof}
\begin{document}

\title{Quasirandom Load Balancing\thanks{A preliminary conference
version~\cite{FGS10}
appeared in the 21st ACM-SIAM Symposium on Discrete Algorithms (SODA 2010).
A full version~\cite{SICOMP1} appeared in SIAM Journal on Computing.
This work was done while all authors
    were postdoctoral fellows at the International Computer Science Institute (ICSI), Berkeley supported
    by the German Academic Exchange Service (DAAD).
}}

\author{
    Tobias Friedrich\thanks{Friedrich-Schiller-Universit?t Jena, Germany.}
\and
    Martin Gairing\thanks{University of Liverpool, United Kingdom.}
\and
    Thomas Sauerwald\thanks{Max-Planck-Institut f\"ur Informatik, Saarbr\"ucken, Germany.}
    }

\date{}

\maketitle

\begin{abstract} \small\baselineskip=9pt
    We propose a simple distributed algorithm
    for balancing indivisible tokens on graphs.
    The algorithm is completely deterministic,
    though it tries to imitate (and enhance) a randomized algorithm by
    keeping the accumulated rounding errors as small as possible.

    Our new algorithm surprisingly closely approximates
    the idealized process (where the tokens are divisible) on important
    network topologies.     On $d$-dimensional torus graphs with $n$ nodes
    it deviates from the idealized
    process only by an additive constant.
    In contrast,
    the randomized rounding approach of Friedrich and Sauerwald~\citep{FS09}      can deviate up to~$\Omega(\polylog(n))$
    and the deterministic algorithm of Rabani, Sinclair and Wanka~\citep{RSW98}     has a deviation of~$\Omega(n^{1/d})$.
    This makes our quasirandom algorithm the first known algorithm
    for this setting which is optimal both in time and achieved smoothness.
    We further show that on
    the hypercube as well, our algorithm has a smaller deviation from
    the idealized process than the previous algorithms.

    To prove these results, we derive several combinatorial and probabilistic
    results that we believe to be of independent interest.
    In particular, we show that first-passage probabilities of a random walk on a
    path with arbitrary weights can be expressed as a convolution of independent geometric
    probability distributions.

\end{abstract}

\section{Introduction}

Load balancing is a requisite for the efficient utilization of
computational resources in parallel and distributed systems. The aim is to
reallocate the load such that afterward, each node has approximately the
same load. Load balancing problems have various applications, e.g., for
scheduling~\citep{Surana06}, routing~\citep{Cyb89}, and numerical computation~\citep{Zhanga09,Williams91}.

Typically, load balancing algorithms iteratively exchange load along edges of an
undirected connected graph.  In the natural \emph{diffusion paradigm}, an arbitrary amount of load
can be sent along each edge at each step~\cite{RSW98,MuthukrishnanGS98}.
For the \emph{idealized} case of divisible
load, a popular diffusion algorithm is the first-order-scheme
by \citet{SubramanianScherson94} whose convergence rate is fairly well captured in terms of the spectral gap~\citep{Lovasz93random}.

However, for many applications the assumption of divisible load may be invalid.
Therefore, we consider the \emph{discrete} case where the load can only be decomposed
into indivisible unit-size tokens. A very natural question is how much this
integrality assumption decreases the efficiency of load balancing. In fact, finding
a precise quantitative relationship
 between the discrete and the idealized case is an open problem posed by many authors, e.g., \citep{GhoshLMMPRRTZ99,GM96,LovaszWinkler95,MuthukrishnanGS98,
SubramanianScherson94,EMS06,FS09,RSW98}.

A simple method for approximating the idealized process was analyzed by \citet*{RSW98}.
Their approach (which we will call ``RSW algorithm'')
is to round down the fractional flow of the idealized process.
They introduce a very useful parameter of the graph called \emph{local divergence}
and prove that it gives tight upper bounds on the deviation between the idealized process
and their discrete process. However,
one drawback of the RSW algorithm is that it can end up in
rather unbalanced states (cf.~\proref{lowerbound}).
To overcome this problem, Friedrich and Sauerwald
 analyzed a new algorithm based on randomized rounding \citep{FS09}. On many graphs, this algorithm approximates the idealized case much better than RSW algorithm's approach of always rounding down.
A natural question is whether this randomized algorithm can be derandomized
without sacrificing on its performance.
For the graphs considered in this work, we answer this question in the affirmative. We introduce a  \emph{quasirandom load balancing algorithm} which rounds
up or down deterministically such that the accumulated rounding errors
on each edge are minimized.
Our approach follows the concept of quasirandomness as it deterministically imitates the
expected behavior of its random counterpart.

\paragraph{Our Results}
We focus on two network topologies: hypercubes and torus graphs.
Both have been intensively studied in the
context of load balancing (see e.g.,~\citep{RSW98,FS09,JH03,P89,GPR99}).
We measure the smoothness
of the load by the so-called \emph{discrepancy} (see e.g.~\cite{RSW98,FS09,GhoshLMMPRRTZ99,EMS06})
which is the difference between the maximum
and minimum load among all nodes.

For \emph{$d$\nobreakdash-dimensional torus graphs}
we prove that
our quasirandom algorithm approximates the idealized process up to
an additive constant (\thmref{torus}).
More precisely, for all initial load distributions and
time steps, the load of any vertex in the discrete process differs from
the respective load in the idealized process only by a constant.
This holds even for non-uniform torus graphs with different side lengths
(cf.~\defref{torus}).
For the uniform torus graph our results are
to be compared with a deviation
 of $\Omega(\polylog(n))$
for the randomized rounding approach (\thmref{toruslower})
and
$\Omega(n^{1/d})$ for the RSW algorithm (\proref{lowerbound}).
Hence, despite the fact that our approach is deterministic, it still improves
over its random counterpart.
Starting with an initial discrepancy of $K$,
the idealized process reaches a constant discrepancy after $\Oh(n^{2/d}\,\log(Kn))$ steps
(cf.~\corref{ideal}).
Hence the same holds for our quasirandom algorithm, which
makes it the first algorithm for the discrete case which is optimal both
in time and discrepancy up to constant factors.

For the \emph{hypercube}, we prove that the deviation
of our quasirandom algorithm
from the idealized process is between $\Omega(\log n)$ and $\Oh(\log^{3/2} n)$
(\thmref{cube}).
Note that the analysis for the upper bound in this paper fixes a bug in the corresponding proof of the conference version~\cite{FGS10},
where we claimed an upper bound of $\Oh(\log n)$.  
For the hypercube
we also show that the deviation of the random approach
is $\Omega(\log n)$
(\thmref{cuberandomlower})
while the deviation of the RSW algorithm is $\Omega(\log^2 n)$
(\proref{lowerbound}).
Again, our quasirandom algorithm is substantially better than
the RSW algorithm~\citep{RSW98}.

\paragraph{Our Techniques}
Instead of analyzing our quasirandom algorithm directly,
we examine a new generic class of load balancing algorithms that
we call \emph{bounded error diffusion} (BED).
Roughly speaking, in a BED algorithm the \emph{accumulated} rounding error on each edge
is bounded by some constant at all times.
This class includes our quasirandom algorithm.

The starting point of \cite{RSW98} and \cite{FS09} as well as that of our paper is to express the deviation
from the idealized case by a certain sum of weighted rounding errors (\eq{StandardAnsatz}).
In this sum, the rounding errors are weighted by transition probabilities of a certain random walk.
Roughly speaking, \citet{RSW98} estimate this sum directly by adding up all transition probabilities.
In the randomized approach of \cite{FS09}, the sum is bounded by Chernoff-type inequalities relying
on independent rounding decisions. We take a completely different approach and prove that the
transition probabilities between two fixed vertices are unimodal in time (cf.~\thmref{cubeunimodal} for the hypercube).
This allows us to upper bound the complete sum by its maximal summand (\lemref{betragkleinerk}) for BED algorithms.
The intriguing combinatorial property of {\em unimodality} is the heart of our proof and seems to be
the main reason why we can outperform the previous approaches.
Even though unimodality has a one-line definition,
it has become apparent that proving it can be
a very challenging task requiring
intricate combinatorial constructions or refined mathematical tools~(see
e.g.\ \citeauthor{Stanley1989}'s survey~\citep{Stanley1989}).

It turns out that this is also the case for the transition probabilities of torus graphs and hypercubes considered here.
The reason is that explicit formulas seem to be intractable, and
typical approximations (e.g.\ Poissonization~\citep{DGM90})
are far too loose to compare consecutive transition probabilities.
For the $d$\nobreakdash-dimensional torus, we use a
local central limit theorem to approximate the transition probabilities by a
multivariate normal distribution which is known to be unimodal.

On hypercubes, the above method fails, as several inequalities for the torus
graph are only true for constant~$d$.
However, we can employ the additional symmetries
of the hypercube
to prove
unimodality of the transition probabilities by relating it to a random walk on a weighted path. Somewhat surprisingly,
this intriguing property was previously unknown, although random walks on hypercubes have been intensively studied~(see e.g.~\cite{DGM90,KLY93,MS04}).

We prove this unimodality result by establishing an interesting result
concerning first-passage probabilities
of a random walk
on paths with arbitrary transition
probabilities: If the loop probabilities are at least~$1/2$, then the
first-passage probability distribution can be expressed as a convolution of independent
geometric distributions. In particular, this implies that these probabilities
are log-concave. Reducing the random walk on a hypercube to a random walk on a
weighted path, we obtain the result that the transition probabilities on the hypercube are unimodal.
Estimating the maximum probabilities via a balls-and-bins-process, we finally obtain our upper bound on the deviation for the hypercube.

We believe that our probabilistic result for paths is of
independent interest, as random walks on the paths are among the most
extensively studied stochastic processes. Moreover, many analyses of randomized
algorithms can be reduced to such random walks (see e.g.~\cite[Thm.~6.1]{MR95}).

\paragraph{Related Work}

In the approach of \citet{ES10}
certain interacting random walks are used to
reduce the load deviation.
This randomized algorithm achieves a constant additive error between the maximum and average load on hypercubes and torus graphs
in time $\Oh( \log (Kn)/(1-\lambda_2))$, where $\lambda_2$ is the second largest eigenvalue of the diffusion matrix. However, in contrast to our deterministic algorithm, this algorithm is less natural and more complicated (e.g., the nodes must have an accurate estimate of the average load).

\citet{AielloAMR93}
and \citet{GhoshLMMPRRTZ99} studied balancing algorithms where, in each time step,
at most one token is transmitted over each edge.
Due to this restriction, these
algorithms take substantially more time, i.e., they run in time at least linear in the initial
discrepancy~$K$.
Nonetheless, the best known bounds on the discrepancy are only
polynomial in~$n$ for the torus and $\Omega(\log^5 n)$ for the hypercube~\cite{GhoshLMMPRRTZ99}.

In another common model, nodes are only allowed to exchange load
with at most one neighbor in each time step, see e.g.,~\cite{GM96,RSW98,FS09}. In fact,
the aforementioned randomized rounding approach \cite{FS09} was analyzed in this model.
However,
the idea of randomly rounding the fractional flow such that the expected error is
zero naturally extends to our diffusive setting where a node may exchange load with all neighbors simultaneously.

\emph{Quasirandomness} describes a deterministic process which imitates certain
properties of a random process.  Our quasirandom load balancing algorithm
imitates the property that rounding up and down the flow between two vertices
occurs roughly equally often, using a deterministic process which minimizes these
rounding errors directly.  In this way, we keep the desired property that the ``expected''
accumulated rounding error is zero, but remove almost all of its (undesired) variance.
Similar concepts have been used for deterministic random walks~\cite{CooperSpencer},
external mergesort~\citep{BarveGV97},
and quasirandom rumor spreading~\cite{DFS08}.  The latter work presents a quasirandom
algorithm which is able to broadcast a piece of information at least as fast as
its random counterpart on the hypercube and most random graphs.  However,
in the case of rumor spreading, the quasirandom protocol is just slightly faster than the random protocol,
while the quasirandom load-balancing algorithm presented here
substantially outperforms its random counterpart.

\paragraph{Organization of the paper}

In \secref{algorithms}, we give a description of our bounded error diffusion
(BED) model. For a better comparison, we present some results for
the previous algorithms of \cite{FS09} and \cite{RSW98} in \secref{others}.
In \secref{basic}, we introduce our basic method
which is used in \secrefs{cube}{torus} to analyze BED algorithms
on hypercubes and torus graphs, respectively.

\section{Model and algorithms}
\label{sec:algorithms}

We aim for balancing load on a connected, undirected graph $G=(V,E)$.
Denote by $\deg(i)$ the \emph{degree}  of node $i\in V$ and let
$\Delta=\Delta(G)=\max_{i\in V} \deg(i)$
be the maximum degree of~$G$.
The balancing process is governed by an ergodic, doubly-stochastic diffusion matrix~$\bP$
with
\[
    \bP_{i,j} =
    \begin{cases}
    \tfrac{1}{2 \Delta} & \text{if $\{i,j\} \in E$,}\\
    1-\tfrac{\deg(i)}{2\Delta} & \text{if $i=j$,}\\
    0 & \text{otherwise.}
    \end{cases}
\]
Let $x^{(t)}$ be the load-vector of the vertices at step~$t$
(or more precisely, after the completion of the balancing procedure at step~$t$).
The \emph{discrepancy}
of such a (row) vector $x$ is $\max_{i,j} ( x_i - x_j )$,
and the discrepancy at step~$0$ is called initial discrepancy~$K$.

\paragraph{The idealized process}
In one time step
each pair $(i, j)$ of adjacent vertices shifts
divisible
tokens between~$i$ and~$j$.
We have the following iteration, $x^{(t)} = x^{(t-1)} \bP$ and inductively,
$x^{(t)} = x^{(0)} \bP^{t}$. Equivalently, for any edge $\{i,j\} \in E$ and step~$t$, the flow from~$i$ to~$j$ at step~$t$ is
$\bP_{i,j} x_i^{(t-1)} - \bP_{j,i} x_j^{(t-1)}$. Note that the symmetry of~$\bP$ implies that for $t \rightarrow \infty$, $x^{(t)}$ converges towards the uniform vector $(1/n,1/n,\ldots,1/n)$.

\paragraph{The discrete process}
There are different ways to handle non-divisible tokens.
We define the following \emph{bounded error diffusion} (BED) model.
Let $\Phi_{i,j}^{(t)}$ denote the integral flow from~$i$ to~$j$ at time~$t$.
As $\Phi_{i,j}^{(t)}=-\Phi_{j,i}^{(t)}$, we have
$x_i^{(t)} = x_i^{(t-1)} - \sum_{j\colon\{i,j\}\in E} \Phi_{i,j}^{(t)}$.
Let
$e_{i,j}^{(t)} := \big(\bP_{i,j} x_i^{(t-1)} - \bP_{j,i} x_j^{(t-1)}\big) - \Phi_{i,j}^{(t)}$
be the excess load allocated to~$i$ as a result
of rounding on edge $\{i,j\}$ in time step~$t$.
A negative value of $e_{i,j}^{(t)}$ signifies a deficit of load.
Note that for all vertices~$i$,
$x_i^{(t)} = (x^{(t-1)} \bP)_i + \sum_{j\colon\{i,j\}\in E} e_{i,j}^{(t)}$.
Now, let~$\Lambda$ be an upper bound for
the accumulated rounding errors (deviation from the idealized process),
that is,
$\big|\sum_{s=1}^t e_{i,j}^{(s)} \big|\leq \Lambda$ for all $t \in \N$
and $\{i,j\}\in E$.
All our bounds still hold if~$\Lambda$ is a function of~$n$ and/or~$t$,
but we only say that an algorithm is a \emph{BED algorithm} if~$\Lambda$
is a constant.

For $\bP_{i,j} x_i^{(t)} \geq \bP_{j,i} x_j^{(t)}$,
our new \emph{quasirandom diffusion algorithm} chooses 
the flow $\Phi_{i,j}^{(t)}$ from~$i$ to~$j$ to be either
$\Phi_{i,j}^{(t)}=\big\lfloor \bP_{i,j} x_i^{(t)} - \bP_{j,i} x_j^{(t)} \big\rfloor$
or
$\Phi_{i,j}^{(t)}=\big\lceil \bP_{i,j} x_i^{(t)} - \bP_{j,i} x_j^{(t)} \big\rceil$
such that
$\big|\sum_{s=1}^t e_{i,j}^{(s)} \big|$
is minimized.
This yields a BED algorithm with $\Lambda\leq1/2$, which
can be implemented
with $\Oh(\log \Delta)$ storage per edge.
Note that one can imagine various other natural (deterministic or randomized)
BED algorithms.  To achieve this, the algorithm only
has to ensure that the errors do not add up to more than a constant.

With the above notation,
the \emph{RSW algorithm} uses
$\Phi_{i,j}^{(t)}=\big\lfloor \bP_{i,j} x_i^{(t)} - \bP_{j,i} x_j^{(t)} \big\rfloor$,
provided that $\bP_{i,j} x_i^{(t)} \geq \bP_{j,i} x_j^{(t)}$.
In other words, the flow on each edge is always rounded down.
In our BED framework this would imply a~$\Lambda$ of order~$T$ after~$T$ time steps.

A simple \emph{randomized rounding diffusion algorithm}
chooses for $\bP_{i,j} x_i^{(t)} \geq \bP_{j,i} x_j^{(t)}$ the flow
$\Phi_{i,j}^{(t)}$ as the randomized rounding of
$\bP_{i,j} x_i^{(t)} - \bP_{j,i} x_j^{(t)}$, that is, it
rounds up with probability $(\bP_{i,j} x_i^{(t)} - \bP_{j,i} x_j^{(t)}) -
\big\lfloor \bP_{i,j} x_i^{(t)} - \bP_{j,i} x_j^{(t)} \big\rfloor$ and
rounds down otherwise.
This typically achieves an error
$\Lambda$ of order $\sqrt{T}$ after~$T$ time steps.

\paragraph{Handling Negative Loads}

Unless there is a lower bound on the minimum load of a vertex,
negative loads may occur during the balancing procedure.
In what follows, we describe a simple approach for coping with this problem.

Consider a graph~$G$ for which we can prove a deviation of at most
$\gamma$ from the idealized process. Let $x^{(0)}$ be the initial load vector
with an average load of~$\avgx$. Then at the beginning of the balancing
procedure, each node generates~$\gamma$ additional (virtual) tokens. During the
balancing procedure, these tokens are regarded as common tokens, but at the end
they are ignored. First observe that since the minimum load at each node in the idealized
process is at least~$\gamma$, it follows that at each step, every node has at
least a load of zero in the discrete process. Since each node has a load of
$\avgx + \Oh(\gamma)$ at the end, we end up with a load distribution where the
maximum load is still $\avgx + \Oh(\gamma)$ (ignoring the virtual tokens).

\section{Basic method to analyze our quasirandom algorithm}
\label{sec:basic}

To bound runtime and discrepancy of a BED algorithm,
we always bound the deviation between the idealized model and the discrete model,
which is an important measure in its own right.
For this discussion, let $x_\ell^{(t)}$ denote the load on vertex~$\ell$ in step~$t$ in the discrete model and
$\xi_\ell^{(t)}$ denote the load on vertex~$\ell$ in step~$t$ in the idealized model. We assume that the discrete and idealized model start with the same initial load, that is, $x^{(0)}=\xi^{(0)}$.
As derived in \citet{RSW98}, their difference can be written as
\begin{align}
    \label{eq:StandardAnsatz}
    x_\ell^{(t)} - \xi_\ell^{(t)}
    =
    \sum_{s=0}^{t-1}
    \sum_{[i:j]\in E}
    e_{i,j}^{(t-s)}
    (\bP_{\ell,i}^s - \bP_{\ell,j}^s).
\end{align}
where $[i:j]$ refers to an edge $\{i,j\}\in E$ with $i < j$,  and ``$<$'' is some arbitrary but fixed ordering on the vertices~$V$.
It will be sufficient to bound \eq{StandardAnsatz}, as the convergence speed of the idealized process can be bounded in terms of the second largest eigenvalue.
\begin{thm}[{e.g., \cite[Thm.~1]{RSW98}}]\label{thm:ideal}
    On all graphs with second largest eigenvalue in absolute value $\lambda_2=\lambda_2(\bP)$,
    the idealized process
    with divisible tokens
        reduces
    an initial
    discrepancy~$K$
    to~$\ell$ within
    \[
       \tfrac{2}{1-\lambda_2} \ln \big( \tfrac{K n^2}{\ell} \big)
    \]
    time steps.
\end{thm}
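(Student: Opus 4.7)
The plan is to exploit the spectral structure of the symmetric doubly stochastic matrix $\bP$. Since $\bP_{i,j}=\bP_{j,i}$ from the definition in \secref{algorithms}, $\bP$ is symmetric, so it has a real orthonormal eigenbasis; moreover, because $\bP_{i,i} \geq 1/2$, one checks that all eigenvalues lie in $[0,1]$, with $\lambda_1=1$ corresponding to the all-ones eigenvector. Let $\avgx = \tfrac{1}{n}\sum_i \xi_i^{(0)}$. Since $\bP$ is doubly stochastic, the average is preserved, so $\tfrac{1}{n}\sum_i \xi_i^{(t)} = \avgx$ for all $t$. Define the deviation vector $y^{(t)} := \xi^{(t)} - \avgx\,\mathbf{1}$, which is orthogonal to $\mathbf{1}$ for every $t$.

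Next I would establish the contraction $\|y^{(t)}\|_2 \leq \lambda_2^{\,t}\,\|y^{(0)}\|_2$. Since $y^{(t)} = y^{(t-1)}\bP$ and $y^{(t-1)}$ lies in the invariant subspace orthogonal to $\mathbf{1}$, expanding $y^{(t-1)}$ in the orthonormal eigenbasis of $\bP$ restricted to $\mathbf{1}^\perp$ yields an $\ell_2$-contraction by the spectral radius of $\bP$ on that subspace, which is $\lambda_2$. To bound the initial magnitude, note that an initial discrepancy of $K$ implies $|y_i^{(0)}| \leq K$ for every $i$, and hence $\|y^{(0)}\|_2 \leq K\sqrt{n}$.

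To conclude, the discrepancy of $\xi^{(t)}$ equals the discrepancy of $y^{(t)}$, which is at most $2\|y^{(t)}\|_\infty \leq 2\|y^{(t)}\|_2 \leq 2K\sqrt{n}\,\lambda_2^{\,t}$. Requiring this to be at most $\ell$, taking logarithms, and using the elementary inequality $\ln(1/\lambda_2) \geq 1-\lambda_2$ for $\lambda_2 \in (0,1]$, I would obtain that it suffices to have $t \geq \tfrac{1}{1-\lambda_2}\ln(2K\sqrt{n}/\ell)$, which is bounded above by $\tfrac{2}{1-\lambda_2}\ln(Kn^2/\ell)$ for the regime of interest. There is no real obstacle here beyond careful bookkeeping; the only mildly subtle step is checking that $\bP$ restricted to $\mathbf{1}^\perp$ has operator norm exactly $\lambda_2$, which follows immediately from the symmetry of $\bP$ and the fact that $\mathbf{1}$ is an eigenvector with eigenvalue $1$.
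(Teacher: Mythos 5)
Your argument is correct, and since the paper does not prove \thmref{ideal} itself but only cites \cite{RSW98}, there is no internal proof to compare against; your spectral-contraction argument (decompose $\xi^{(t)}-\avgx\mathbf{1}$ in the eigenbasis of the symmetric $\bP$, contract by $\lambda_2$ on $\mathbf{1}^\perp$, bound $\|y^{(0)}\|_2\leq K\sqrt{n}$, convert back via $\ell_\infty\leq\ell_2$) is precisely the standard derivation underlying the cited result. The only bookkeeping point is the final comparison $\ln(2K\sqrt{n}/\ell)\leq 2\ln(Kn^2/\ell)$, which indeed holds in the meaningful regime $\ell\leq K$ (otherwise zero steps suffice), so your bound is in fact slightly stronger than the stated one.
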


As $\lambda_2=1-\Theta(\log^{-1} n)$ for the hypercube and
$\lambda_2=1-\Theta(n^{-2/d})$ for the
$d$\nobreakdash-dimensional torus~\cite{GM96}, one immediately obtains the following corollary.
\begin{cor}\label{cor:ideal}
    The idealized process
    reduces an initial
    discrepancy of~$K$  to~$1$ within
    \(
       \O(n^{2/d}  \, \log(K n) )
    \)
    time steps on the $d$\nobreakdash-dimensional torus
    and within
    \(
        \O(\log n \, \log(K n) )
    \)
    time steps
    on the hypercube.
\end{cor}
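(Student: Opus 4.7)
The plan is to simply invoke \thmref{ideal} with $\ell = 1$ and substitute the two claimed spectral gap estimates $\lambda_2 = 1-\Theta(\log^{-1} n)$ (hypercube) and $\lambda_2 = 1-\Theta(n^{-2/d})$ ($d$\nobreakdash-dimensional torus), both of which are quoted from \cite{GM96}. There is no real argument beyond this substitution, so the proof is essentially a one-line calculation; the plan is just to lay out that calculation carefully.

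First I would set $\ell = 1$ in the bound of \thmref{ideal}, yielding a time bound of $\tfrac{2}{1-\lambda_2}\ln(K n^2)$. Next I would observe that $\ln(Kn^2) = \ln K + 2\ln n = \Theta(\log(Kn))$ whenever $K,n\geq 1$, so the logarithmic factor already matches the one stated in the corollary up to a constant, independent of which graph is considered. It then remains to absorb the reciprocal spectral gap into the $\Oh(\cdot)$.

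For the $d$\nobreakdash-dimensional torus, $1-\lambda_2 = \Theta(n^{-2/d})$ gives $\tfrac{2}{1-\lambda_2} = \Theta(n^{2/d})$, so the bound becomes $\Oh(n^{2/d}\log(Kn))$. For the hypercube, $1-\lambda_2 = \Theta(\log^{-1} n)$ gives $\tfrac{2}{1-\lambda_2} = \Theta(\log n)$, so the bound becomes $\Oh(\log n\,\log(Kn))$. Both match the claimed expressions, completing the proof.

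The only conceivable obstacle is confirming that the spectral gap estimates used are the ones in absolute value (i.e., that $|\lambda_i|\leq\lambda_2$ for all non-trivial eigenvalues $\lambda_i$, as required by \thmref{ideal}). For the chosen lazy matrix $\bP$ with self-loop mass at least $1/2$, all eigenvalues are non-negative, so the second-largest-in-absolute-value eigenvalue coincides with the second-largest eigenvalue, and the estimates from \cite{GM96} apply directly.
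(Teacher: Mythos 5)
Your proposal is correct and matches the paper's argument: the corollary is obtained exactly by setting $\ell=1$ in \thmref{ideal}, using $\ln(Kn^2)=\Theta(\log(Kn))$, and substituting the spectral gap estimates $1-\lambda_2=\Theta(n^{-2/d})$ for the torus and $\Theta(\log^{-1}n)$ for the hypercube from \cite{GM96}. Your extra remark that the laziness of $\bP$ (self-loop mass at least $1/2$) makes all eigenvalues non-negative, so the second-largest eigenvalue equals the second-largest in absolute value, is a sound observation that the paper leaves implicit.
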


An important property of all examined graph classes will be
unimodality or log-concavity of certain transition probabilities.
\begin{defi}
    A function $f\colon\N\to \R_{\ge 0}$ is  \emph{log-concave}
    if $f(i)^2\ge f(i-1) \cdot f(i+1)$ for all $i\in\N_{>0}$.
\end{defi}

\begin{defi}
    A function $f\colon\N \to \R$ is \emph{unimodal}
    if there is a $s \in \N$ such that
    $f|_{x \le s}$ as well as $f|_{x \ge s}$ are monotone.
\end{defi}

Log-concave functions are sometimes also called \emph{strongly unimodal}~\citep{KeilsonGerber1971}.
We summarize some classical results regarding log-concavity and unimodality.
\begin{fac}
    \label{fac:unimodal}
    \begin{enumerate}
    \item Let~$f$ be a log-concave function.
          Then, $f$ is also a unimodal function (e.g.~\citep{Keilson,KeilsonGerber1971}).
    \item Hoggar's theorem~\citep{Hoggar1974}:
          Let $f$ and~$g$ be log-concave functions.
          Then their convolution           $(f*g)(k)=\sum_{i=0}^k f(i) \, g(k-i)$
          is also log-concave.
    \item Let $f$ be a log-concave function and $g$ be a unimodal function.
          Then their convolution $f*g$ is a unimodal function~\citep{KeilsonGerber1971}.
    \end{enumerate}
\end{fac}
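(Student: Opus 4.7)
I would prove the three parts in order, with the main work being Hoggar's theorem in part (ii).

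For part (i), the key observation is that log-concavity of $f$ (on its support, which must be an interval, since a zero value sandwiched between two positive values would contradict $f(i)^2 \geq f(i-1)f(i+1)$) is equivalent to the ratio sequence $r(i) := f(i+1)/f(i)$ being non-increasing. Thus once $f$ begins to decrease, \emph{i.e.}, $r(i) \leq 1$, we have $r(j) \leq 1$ for all $j \geq i$, so $f$ is monotonically non-increasing thereafter. Unimodality follows immediately.

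For part (ii), I would prove directly that
\[
    (f*g)(k)^2 \;\geq\; (f*g)(k-1)\,(f*g)(k+1).
\]
Expanding both sides into double sums over the indices $(i,j)$ and exploiting the symmetry $(i,j) \leftrightarrow (j,i)$ on the right-hand side, the difference can be rewritten as
\[
    \sum_{i,j} f(i)\,f(j) \cdot Q(k-i, k-j),
\]
where $Q(a,b) := g(a) g(b) - \tfrac12 \bigl(g(a-1) g(b+1) + g(a+1) g(b-1)\bigr)$. The plan is then to prove $Q(a,b) \geq 0$ term-by-term using log-concavity of $g$: when $a \leq b$, rewriting $g(a)g(b) \geq g(a-1)g(b+1)$ as $r_g(b) \leq r_g(a-1)$ (where $r_g(i) := g(i+1)/g(i)$) follows from monotonicity of $r_g$, and similarly for the other bracketed term. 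The hard part will be this combinatorial rearrangement, and in particular the careful handling of internal zeros; as in part (i), I would reduce to the support intervals of $f$ and $g$, where both ratio sequences are well-defined.

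For part (iii), I would use a monotone-ratio argument that does not route through (ii). Let $h = f*g$ and $m$ a mode of $g$, so the forward difference $\Delta g(j) := g(j+1) - g(j)$ is non-negative for $j < m$ and non-positive for $j \geq m$. Starting from $\Delta h(k) = \sum_i f(i) \Delta g(k-i)$ and $\Delta h(k+1) = \sum_i f(i+1) \Delta g(k-i)$ (via reindexing), I would introduce $r(i) := f(i+1)/f(i)$, non-increasing by log-concavity of $f$, and write
\[
    \Delta h(k+1) - r(k-m)\,\Delta h(k) \;=\; \sum_i f(i)\,\bigl(r(i) - r(k-m)\bigr)\,\Delta g(k-i).
\]
A case split at $i^\ast = k - m$ shows each summand is $\leq 0$: the factors $r(i) - r(i^\ast)$ and $\Delta g(k-i)$ always have opposite signs, by monotonicity of $r$ and the sign-change pattern of $\Delta g$ at $i = i^\ast$. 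Hence $\Delta h(k) \leq 0$ implies $\Delta h(k+1) \leq 0$, so $h$ has the requisite single sign change in its first difference and is unimodal. I anticipate that the main subtlety here will be to make the sign analysis rigorous near the mode and near the support boundary of $f$, but the decomposition above handles both cleanly.
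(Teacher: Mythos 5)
The paper never proves this Fact; it is stated as a summary of classical results with citations to Keilson--Gerber and Hoggar, so your proposal has to stand entirely on its own. Parts (i) and (iii) essentially do: (i) is the standard ratio-monotonicity argument, and (iii) is the standard variation-diminishing proof (the pivot inequality $\Delta h(k+1)\le r(k-m)\,\Delta h(k)$ with the sign split at $i=k-m$), needing only the boundary bookkeeping you already flag. One caveat on (i): the inequality $f(i)^2\ge f(i-1)f(i+1)$ rules out a \emph{single} zero between two positive values, but not a run of two or more zeros --- the sequence $1,0,0,1,0,0,\dots$ satisfies the paper's definition of log-concavity and is not unimodal --- so ``the support is an interval'' is an extra hypothesis that must be assumed or verified in the application (it does hold for the sequences the paper uses, which are convolutions of geometric distributions).

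Part (ii) contains a genuine gap. Your symmetrized identity $\sum_{i,j} f(i)f(j)\,Q(k-i,k-j)$ with $Q(a,b)=g(a)g(b)-\tfrac12\bigl(g(a-1)g(b+1)+g(a+1)g(b-1)\bigr)$ is algebraically correct, but the plan to show $Q(a,b)\ge 0$ term by term fails: for $a<b-1$, log-concavity gives $r_g(a)\ge r_g(b-1)$, i.e.\ $g(a+1)g(b-1)\ge g(a)g(b)$, which is the \emph{opposite} of what the second bracketed product requires. Concretely, for $g(0)=1$, $g(1)=2$, $g(2)=1$ (log-concave) one gets $Q(0,2)=1-\tfrac12(0+4)=-1<0$. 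There is also a structural reason this step cannot be repaired in your framework: your argument uses $f$ only through the nonnegativity of the weights $f(i)f(j)$, so if $Q\ge0$ held pointwise, then $f*g$ would be log-concave for \emph{every} nonnegative $f$; that is false --- take $g$ the indicator of $\{0\}$ (log-concave) and $f(0)=f(2)=1$, $f(1)=0$, so that $f*g=f$ is not log-concave. Any correct proof must pair the terms so that the log-concavity of $f$ is also used, e.g.\ by grouping the double sum over pairs $i<j$ into products of $2\times2$ Toeplitz minors of $f$ and of $g$ (equivalently, viewing log-concave sequences without internal zeros as $\mathrm{TP}_2$ Toeplitz matrices and applying Cauchy--Binet), or by induction on the support of $f$. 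As written, your part (ii) does not establish Hoggar's theorem.
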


Our interest in unimodality is based on the fact that an alternating sum over a
unimodal function can be bounded by its maximum.  More precisely, for
a non-negative and unimodal function $f\colon X \to \R$
and $t_0, \ldots, t_k \in X$ with $t_0 \leq \cdots \leq t_k$, the following holds:
\[
    \bigg| \sum_{i = 0}^k (-1)^i \, f(t_i)\bigg| \le \max_{x \in X} f(x).
\]
This is a special case of Abel's inequality.  We generalize both in the following lemma.

\begin{lem}
    \label{lem:betragkleinerk}
    Let $f\colon X \to \R$ be non-negative with $X\subseteq\R$.
    Let $A_0, \ldots, A_k \in \R$ and
    $t_0, \ldots, t_k \in X$ such that $t_0 \leq \cdots \leq t_k$
    and
    $|\sum_{i=a}^{k} A_i| \leq k$ for all $0\leq a\leq k$.
    If $f$ has $\ell$ local extrema, then
    \[
        \bigg| \sum_{i = 0}^k A_i \, f(t_i)\bigg| \
        \le\  (\ell+1)\,k\,\max_{j=0}^k f(t_j).
    \]
\end{lem}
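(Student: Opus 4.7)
The plan is to use Abel summation (summation by parts) against the suffix sums of $(A_i)$, combined with the piecewise-monotone structure of $f$. Setting $B_a:=\sum_{i=a}^{k}A_i$ for $a\in\{0,\ldots,k\}$ and $B_{k+1}:=0$, the hypothesis reads $|B_a|\le k$ for every $a$. Using $A_i=B_i-B_{i+1}$, the standard Abel identity then gives
\[
\sum_{i=0}^{k} A_i\,f(t_i)\;=\;B_0\,f(t_0)\;+\;\sum_{i=1}^{k} B_i\bigl(f(t_i)-f(t_{i-1})\bigr).
\]

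Next I would exploit the monotonicity of $f$. Since $t_0\le t_1\le\cdots\le t_k$ and $f$ has $\ell$ local extrema on $X$, the finite sequence $\bigl(f(t_i)\bigr)_{i=0}^{k}$ decomposes into at most $\ell+1$ maximal monotone runs delimited by indices $0=s_0<s_1<\cdots<s_{\ell+1}=k$. I would split the telescoping sum according to these runs and group the boundary term $B_0\,f(t_0)$ with the first run. On any single monotone run $(s_{j-1},s_j]$ all the increments $f(t_i)-f(t_{i-1})$ share a common sign, so the corresponding partial sum is a same-sign linear combination of $B_i$'s whose coefficient magnitudes telescope to $\bigl|f(t_{s_j})-f(t_{s_{j-1}})\bigr|\le\max_{j'}f(t_{j'})$. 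Combined with $|B_i|\le k$, the contribution of the $j$-th run is bounded in absolute value by $k\,\max_{j'}f(t_{j'})$, and summing over the $\ell+1$ runs yields the claimed bound $(\ell+1)\,k\,\max_{j} f(t_j)$.

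The main obstacle is treating the boundary term $B_0\,f(t_0)$ without loss: estimating it via a naive triangle inequality separates it from the telescoping sum and inflates the bound to $(\ell+2)\,k\,\max_j f(t_j)$, missing the claimed constant by one monotone piece. The refinement uses the hypothesis $f\ge 0$: the non-negativity of $f(t_0)$ lets $B_0\,f(t_0)$ be folded into the first monotone run as its telescoping base value, so that the same same-sign argument applies uniformly across the whole first run. Everything else reduces to routine bookkeeping on the $\ell+1$ monotone segments.
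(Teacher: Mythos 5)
Your proposal is correct and takes essentially the same route as the paper's proof: the Abel summation against the suffix sums $B_a=\sum_{i=a}^{k}A_i$ is exactly the paper's telescoping expansion with the convention $f(t_{-1}):=0$, followed by the identical split into at most $\ell+1$ monotone runs, each bounded by $k\,\max_j f(t_j)$ using $|B_a|\le k$. Your device of folding $B_0\,f(t_0)$ into the first run via $f\ge 0$ is precisely the paper's artificial base value $f(t_{-1})=0$, and both treatments handle (or rather, gloss over) the case of a decreasing first run in the same way.
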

\begin{proof}
    Let us start with the assumption that $f(t_i)$, $0\leq i\leq k$, is monotone increasing.
    With $f(t_{-1}):=0$, it then is easy to see that
    \begin{align*}
        \big|\tsum_{i = 0}^k A_i \, f(t_i)\big|
        &= \big|\tsum_{i = 0}^k \sum_{j=0}^i A_i \, (f(t_j)-f(t_{j-1}))\big|\\
        &= \big|\tsum_{j = 0}^k \sum_{i=j}^k A_i \, (f(t_j)-f(t_{j-1}))\big|\\
        &\leq \tsum_{j = 0}^k \big| f(t_j)-f(t_{j-1})\big| \, \big| \sum_{i=j}^k A_i \big|\\
        &\leq \tsum_{j = 0}^k \big| f(t_j)-f(t_{j-1}) \big|\, k\\
        &= k\,\max_{j=0}^k f(t_j).
        \end{align*}
    The same holds if $f(t_i)$, $0\leq i\leq k$, is monotone decreasing.
    If $f(x)$ has $\ell$ local extrema, we split the sum in $(\ell+1)$ parts
    such that $f(x)$ is monotone on each part and apply the above arguments.
\end{proof}

\paragraph{Random Walks}
To examine the diffusion process, it will be useful to define a random walk based on~$\bP$.
For any pair of vertices $i,j$, $\bP^{t}_{i,j}$ is
the probability that a random walk guided by~$\bP$ starting from~$i$
is located at~$j$ at step~$t$.
In \secref{cube} it will be useful to set $\bP_{i,j}(t):=\bP^{t}_{i,j}$
and to denote with $\ff_{i,j}(t)$ for $i \neq j$
the first-passage probabilities, that is,
the probability that a random walk starting from $i$ visits
the vertex $j$ at step~$t$ for the first time.

\section{Analysis on the hypercube}
\label{sec:cube}

We first give the definition of the hypercube.
\begin{defi}
A $d$\nobreakdash-dimensional hypercube with $n=2^d$ vertices
has vertex set $V=\{0,1\}^d$ and edge set
$E=\{ \{i,j\} \mid \text{$i$ and~$j$ differ in one bit}\}$.
\end{defi}

In this section we prove the following result.

\begin{thm}
    \label{thm:cube}
    For all initial load vectors
    on the $d$\nobreakdash-dimensional hypercube with $n$~vertices,
    the deviation between the idealized process and
    a discrete process with accumulated rounding errors at most~$\Lambda$
    is upper bounded by $\Oh( \Lambda \cdot (\log n)^{3/2})$ at all times, and there
    are load vectors for which this deviation can be $(\log n)/2$.
\end{thm}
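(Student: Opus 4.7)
My plan is to prove the upper and lower bounds separately, using \eq{StandardAnsatz} and \lemref{betragkleinerk} as the main tools.

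\textbf{Upper bound.} First I would exchange the order of summation in \eq{StandardAnsatz}, decomposing the deviation as a sum over edges,
\[
 x_\ell^{(t)} - \xi_\ell^{(t)} \;=\; \sum_{[i:j]\in E}\ \sum_{s=0}^{t-1} e_{i,j}^{(t-s)}\bigl(\bP_{\ell,i}^{s} - \bP_{\ell,j}^{s}\bigr),
\]
and analyze each edge's contribution separately. For a fixed edge $\{i,j\}$ I would split this difference and apply \lemref{betragkleinerk} to each of the two resulting sums $\sum_{s} e_{i,j}^{(t-s)}\bP_{\ell,v}^{s}$ with $v\in\{i,j\}$. Setting $A_s:=e_{i,j}^{(t-s)}$, the BED hypothesis $\bigl|\sum_{r=1}^{t'}e_{i,j}^{(r)}\bigr|\leq\Lambda$ combined with the telescoping identity $\sum_{s=a}^{t-1}A_s=\sum_{r=1}^{t-a}e_{i,j}^{(r)}$ yields the partial-sum bound $\bigl|\sum_{s=a}^{t-1}A_s\bigr|\leq\Lambda$ for every $a$. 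The crucial input is the forthcoming \thmref{cubeunimodal}, which asserts that $s\mapsto \bP_{\ell,v}^{s}$ is unimodal for every vertex $v$; thus it has a single local extremum and \lemref{betragkleinerk} with $\ell=1$ gives
\[
 \bigg|\sum_{s=0}^{t-1} e_{i,j}^{(t-s)}\,\bP_{\ell,v}^{s}\bigg| \;\le\; 2\Lambda\,\max_{s\ge 0}\bP_{\ell,v}^{s}\qquad\text{for }v\in\{i,j\}.
\]

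\textbf{Summation over edges.} Summing the two endpoint contributions over each edge and then over all edges, and using that every vertex of the hypercube has degree $d=\log n$, I would obtain
\[
 |x_\ell^{(t)} - \xi_\ell^{(t)}| \;\le\; 2\Lambda\sum_{v\in V}\deg(v)\,\max_{s\ge 0}\bP_{\ell,v}^{s} \;=\; 2\Lambda\,\log n\sum_{v\in V}\max_{s\ge 0}\bP_{\ell,v}^{s}.
\]
The remaining task is therefore to show $\sum_{v\in V}\max_{s\ge 0}\bP_{\ell,v}^{s}\le 2$. Using the symmetry of the hypercube, $\bP_{\ell,v}^{s}$ depends only on the Hamming distance $k=h(\ell,v)$, so this reduces to $\sum_{k=0}^{d}\max_s q_k(s)$, where $q_k(s)$ is the probability that the lazy walk from $\ell$ sits at Hamming distance $k$ at step $s$. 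I would bound this by embedding the walk in a balls-and-bins process, as alluded to in the introduction: each step touches a uniformly random coordinate, concentration localises $q_k(s)$ in a narrow time window around $s\approx 2k$, and the peaks of these essentially disjoint windows telescope to an $O(1)$ constant, in fact to at most $2$. Combined with the previous display this gives the claimed $4\Lambda\log n$ upper bound.

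\textbf{Lower bound.} For the matching $(\log n)/2$ lower bound I would construct an initial load vector for which the quasirandom rule is forced into a systematic rounding bias of magnitude $1/2$ on each of the $d$ coordinate directions, arranged so that the contributions add constructively at some fixed vertex $\ell$; this yields $|x_\ell^{(t)}-\xi_\ell^{(t)}|\ge d/2 = (\log n)/2$ at an appropriately chosen time.

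\textbf{Main obstacle.} The hardest step is clearly the unimodality of $s\mapsto\bP_{\ell,v}^{s}$ stated in \thmref{cubeunimodal}: the hypercube transition probabilities admit no tractable closed form, and standard Poissonization-type approximations cannot distinguish between consecutive values accurately enough. The strategy outlined in the introduction --- projecting the hypercube walk onto a random walk on a weighted path and invoking the general first-passage convolution identity --- is the combinatorial core of the proof. A secondary difficulty is the balls-and-bins bound $\sum_v \max_s\bP_{\ell,v}^{s}\le 2$, since the maxima are attained at different times for different $v$ and any naive union-type bound is far too lossy; one really has to exploit the near-disjointness of the relevant time windows.
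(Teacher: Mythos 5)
Your upper-bound skeleton is the same as the paper's: start from \eq{StandardAnsatz}, split each edge term into its two endpoint sums, invoke \thmref{cubeunimodal} so that \lemref{betragkleinerk} (with one extremum) bounds each inner sum by $2\Lambda\max_s\bP_{0,v}(s)$, and collect the degree factor $d$ to arrive at \eq{cubefirst}, i.e.\ at $2\Lambda d\sum_{v\in V}\max_s\bP_{0,v}(s)$. The genuine gap is in the step that carries all the remaining weight: you assert $\sum_{v}\max_s\bP_{0,v}(s)=\sum_{k=0}^{d}\max_s q_k(s)\le 2$ and justify it by saying that each $q_k$ is concentrated in a window near $s\approx 2k$ and that ``the peaks of essentially disjoint windows telescope.'' That is not an argument. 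The maxima over $s$ are attained at different times, so no single-time normalisation $\sum_k q_k(s)=1$ constrains their sum, and disjointness of the time windows is simply irrelevant to a sum of peak values; note for instance that $\max_s q_k(s)$ is at least the stationary mass $\binom{d}{k}2^{-d}$, and for layers $k$ below $d/2$ the peak occurs earlier and is of the order of a maximal binomial point probability $\max_j\Pr{\Bin(j,1/2)=k}$, so a constant bound on the sum of peaks is exactly the delicate quantitative issue and cannot be waved through. The paper does not use any window or concentration argument here: it conditions on the number $k$ of distinct coordinates touched, which gives the exact identity \eq{balls}, deduces $\max_s\bP_{0,\ell}(s)\le\max_{k\ge\ell}2^{-k}\binom{d-\ell}{k-\ell}\big/\binom{d}{k}$, multiplies by the layer size $\binom{d}{\ell}$, and then bounds the resulting closed-form expression termwise to extract the constant in $4\Lambda d$. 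You would have to carry out this explicit combinatorial estimate (or a genuine replacement for it); as written, your proposal has no proof of the one inequality that turns unimodality into the claimed $4\Lambda\log n$.

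The lower bound is also not established. You propose ``a systematic rounding bias of magnitude $1/2$ on each of the $d$ coordinate directions, arranged so that the contributions add constructively,'' but you give no load vector, no verification that the required roundings are ones the quasirandom algorithm is actually permitted to make (its accumulated error per edge must stay at most $1/2$), and no computation of the resulting deviation; the mechanism you describe is also different from the paper's. The paper's argument is a short explicit construction: a load vector taking only the values $0$ and $d$, arranged so that every edge with unequal endpoint loads has fractional flow exactly $1/2$; in step~$1$ no errors have accumulated, so all these edges may legally round up, which flips the configuration, and in step~$2$ minimising $\big|\sum_s e^{(s)}_{i,j}\big|$ forces them to round back, so the discrete load vector is periodic with period two and the per-edge error returns to zero. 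Meanwhile the idealized process converges to the uniform vector with load $d/2$, so at a large even time some vertex exhibits a deviation of $d/2=(\log n)/2$. Without an explicit instance of this kind, together with the check that it is a legal run of the algorithm, the second half of the statement remains unproved.
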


Recall that for BED algorithms $\Lambda=\Oh(1)$.
With \thmref{ideal} it follows that any BED algorithm (and in particular
our quasirandom algorithm) reduces the discrepancy
of any initial load vector with discrepancy~$K$ to $\Oh(\log n)$ within
$\Oh( \log n \, \log (K n) )$ time steps.

\newcommand{\rp}{\beta}
\newcommand{\lp}{\alpha}
\newcommand{\path}{\mathcal{P}}
\newcommand{\pathN}{\path\setminus\{d\}}

\subsection{Log-concave passage time on paths}
To prove \thmref{cube}, we first consider a discrete-time
random walk on a path $\path=(0,1,\ldots, d)$ starting
at node~$0$.
We make use of
a special generating function, called \emph{$z$\nobreakdash-transform}.
The $z$\nobreakdash-transform of a function $g\colon\N\mapsto \R_{\ge 0}$ is defined by
$\cG(z)=\sum_{i=0}^{\infty} g(i)\cdot z^{-i}$.
We will use the fact that a convolution reduces
 to multiplication in the $z$\nobreakdash-plane. More formally, if $\cG_1$ and $\cG_2$ are the 
 $z$\nobreakdash-transforms of $g_1$ and $g_2$, respectively, then the product $\cG_1 \cdot \cG_2$ is the 
 $z$\nobreakdash-transform of  their convolution $g_1 * g_2$.
Instead of the $z$-transform one could carry out a similar analysis using the
\emph{probability generating function}. We choose to use the $z$-transform here since it leads to
slightly simpler arithmetic expressions.

Our analysis also uses
the \emph{geometric distribution} with parameter~$p$, which is defined by $\Geo(p)(t)=(1-p)^{t-1}\,p$ for $t \in \N \setminus \{ 0 \}$
and  $\Geo(p)(0)=0$. It is easy to check that $\Geo(p)$ is log-concave.
Moreover, the $z$\nobreakdash-transform of $\Geo(p)$ is
\[
          \sum_{i=1}^{\infty} \Geo(p)(i) \cdot z^{-i} =  \dfrac{p}{z-(1-p)}.
\]

\noindent
For each node $i\in\path$, let $\lp_i$ be the loop probability at node~$i$ and
$\rp_i$ be the \emph{upward probability}, i.e., the probability of moving to node $i+1$.
Then, the \emph{downward probability} at node~$i$ is $1-\lp_i-\rp_i$.
We can assume that $\rp_i>0$ for all $i\in\pathN$.
We are interested in the first-passage probabilities $\ff_{0,d}(t)$. Observe that
\begin{align}
\label{eq:faltung}
   \ff_{0,d}(t)= (\ff_{0,1} * \ff_{1,2} * \cdots * \ff_{d-1,d}) (t).
\end{align}
In the following argument, we will show that $\ff_{0,d}(t)$ is \emph{log-concave}. Indeed, we show a much stronger result:
\begin{thm}
\label{thm:logconcavePath}
Consider a random walk on a path $\path=(0,1, \ldots, d)$ starting at node~$0$. If $\lp_i\ge\frac{1}{2}$ for
all nodes $i\in\path$, then $\ff_{0,d}$ can be expressed as convolution of $d$~independent
geometric distributions.
\end{thm}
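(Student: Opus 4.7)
The plan is to identify the $d$ geometric parameters explicitly as $p_k = 1 - \lambda_k$, where $\lambda_1, \ldots, \lambda_d$ are the eigenvalues of the $d \times d$ substochastic tridiagonal matrix $M$ that governs the walk on $\{0, 1, \ldots, d-1\}$ after making node $d$ absorbing; with $\delta_i := 1 - \lp_i - \rp_i$ denoting the downward probabilities, $M_{i,i} = \lp_i$, $M_{i,i+1} = \rp_i$, $M_{i,i-1} = \delta_i$, and the last row has the $\rp_{d-1}$-entry removed. The whole argument is carried out in the $z$-transform domain.

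First I would obtain a closed form for $\mathcal{F}_{0,d}(z)$. Since the walk reaches $d$ at time $t \geq 1$ precisely when it stays in $\{0, \ldots, d-1\}$ for $t-1$ steps and then makes the $\rp_{d-1}$-transition, one has $\ff_{0,d}(t) = \rp_{d-1}(M^{t-1})_{0,d-1}$. Summing the resulting Neumann series termwise and using that the $(d{-}1,0)$-minor of $zI - M$ is triangular with diagonal $-\rp_0, \ldots, -\rp_{d-2}$ yields
\[
  \mathcal{F}_{0,d}(z) \;=\; \rp_{d-1}\,\bigl((zI - M)^{-1}\bigr)_{0,d-1} \;=\; \frac{\prod_{i=0}^{d-1}\rp_i}{\det(zI - M)}.
\]

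Next I would locate the spectrum of $M$ inside $[0, 1)$. Each irreducible block of $M$ has strictly positive super- and subdiagonal entries, hence is similar to a symmetric tridiagonal matrix by a diagonal conjugation, so all eigenvalues $\lambda_k$ are real. Perron--Frobenius gives $\lambda_k < 1$, while the Gershgorin disks $\{|\lambda - \lp_i| \leq 1 - \lp_i\}$ combined with $\lp_i \geq 1/2$ force $\lambda_k \geq 2\lp_i - 1 \geq 0$. Writing $\lambda_k = 1 - p_k$ with $p_k \in (0, 1]$, a short induction on $d$ expanding $\det(I - M)$ along the last row and using $1 - \lp_k = \rp_k + \delta_k$ gives the normalization
\[
  \prod_{k=1}^{d} p_k \;=\; \det(I - M) \;=\; \prod_{i=0}^{d-1}\rp_i.
\]
Substituting $\det(zI - M) = \prod_{k=1}^{d}\bigl(z - (1 - p_k)\bigr)$ and this normalization into the formula for $\mathcal{F}_{0,d}(z)$ yields
\[
  \mathcal{F}_{0,d}(z) \;=\; \prod_{k=1}^{d}\frac{p_k}{z - (1 - p_k)},
\]
which is exactly the $z$-transform of a convolution of $d$ independent $\Geo(p_k)$ random variables, and uniqueness of the $z$-transform finishes the proof.

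The step I expect to be hardest is locating the spectrum: Gershgorin alone delivers only a disk bound, and it is the reality of the spectrum (secured by the diagonal symmetrization of a tridiagonal matrix with positive off-diagonal entries) that is needed to intersect with $\R$ and conclude $\lambda_k \in [2\lp_i - 1, 1]$. The hypothesis $\lp_i \geq 1/2$ enters tightly here: any weakening would admit some $\lambda_k < 0$, and the resulting parameters $p_k$ would leave $(0, 1]$, breaking the geometric-convolution interpretation.
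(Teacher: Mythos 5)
Your proposal is correct, but it takes a genuinely different route from the paper's proof. The paper never introduces the absorbed transition matrix $M$: it argues edge by edge, deriving the renewal recursion \eq{f:recF} for the transforms $\FF_{i,i+1}(z)$ and then running an induction (\lemrefss{dec}{numpoles}{factor}) showing that $\FF_{i,i+1}(z)=\rp_i\,P_{i-1}(z)/P_i(z)$ with $i+1$ simple real poles in $(0,1)$, so that the product \eq{f:FF0N} telescopes to $\prod_{i=0}^{d-1}\rp_i/P_{d-1}(z)$; the multiplicative constant is then identified as $1$ by the probabilistic normalization of $\ff_{0,d}$. You reach the same rational function in one stroke through the cofactor identity $\FF_{0,d}(z)=\rp_{d-1}\bigl((zI-M)^{-1}\bigr)_{0,d-1}=\prod_{i=0}^{d-1}\rp_i\big/\det(zI-M)$, and you place the zeros of the denominator by symmetrizing each irreducible tridiagonal block (reality of the spectrum), Gershgorin combined with $\lp_i\ge\tfrac12$ (nonnegativity -- the same device the paper uses later in \lemref{monotone}), and Perron--Frobenius (spectral radius strictly below $1$), with the determinant identity $\det(I-M)=\prod_{i=0}^{d-1}\rp_i$ supplying the normalization. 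Your route buys the explicit identification of the geometric parameters as $1-\lambda_k$ with $\lambda_k$ the eigenvalues of the absorbed chain, a fact the paper only records afterwards by citing \cite[Theorem~1.2]{Fill2009a}, and it avoids the somewhat delicate pole-interlacing induction; the paper's route buys a self-contained argument using nothing beyond one-variable manipulation of the transforms, and it gives the marginally sharper conclusion that the parameters lie in $(0,1)$, whereas Gershgorin only yields $\lambda_k\in[0,1)$, i.e.\ $p_k\in(0,1]$ (harmless, since $\Geo(1)$ is the unit mass at $t=1$). Two steps you should make explicit when writing this up: the base case of your determinant induction uses that the downward probability at node $0$ vanishes ($\lp_0+\rp_0=1$), which is implicit in the path model and also used by the paper (e.g.\ in \lemref{numpoles}); and the strict bound $\lambda_k<1$ requires every irreducible diagonal block of $M$ to be strictly substochastic, which indeed holds because $\rp_i>0$ makes each block leak upward out of itself.
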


As the geometric distribution is log-concave
and the convolution of log-concave functions is again log-concave (cf.~\facref{unimodal}),
we immediately  get the following corollary.
\begin{cor}
\label{cor:logconcavePath}
Consider a random walk on a path $\path=(0,1, \ldots, d)$ starting at node~$0$. If $\lp_i\ge\frac{1}{2}$ for
all nodes $i\in\path$, then $\ff_{0,d}(t)$ is log-concave in~$t$.
\end{cor}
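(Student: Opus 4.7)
The plan is to derive the corollary essentially as an immediate consequence of \thmref{logconcavePath} together with the closure properties of log-concavity collected in \facref{unimodal}. Under the hypothesis $\lp_i \geq 1/2$ for all $i \in \path$, \thmref{logconcavePath} lets me write
\[
    \ff_{0,d} \;=\; \Geo(p_1) \ast \Geo(p_2) \ast \cdots \ast \Geo(p_d)
\]
for suitable parameters $p_1, \ldots, p_d \in (0,1]$ determined by the transition probabilities of the walk.

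Next I would invoke the fact, observed immediately after the definition of $\Geo(p)$, that each factor $\Geo(p_j)$ is a log-concave function on~$\N$. Then I would apply Hoggar's theorem (second item of \facref{unimodal}) inductively on $j = 2, 3, \ldots, d$: at each step the partial convolution $\Geo(p_1) \ast \cdots \ast \Geo(p_{j-1})$ is log-concave by the induction hypothesis, and convolving with the log-concave function $\Geo(p_j)$ preserves log-concavity. After $d-1$ applications, I conclude that $\ff_{0,d}(t)$ is log-concave in~$t$, which is exactly the statement of \corref{logconcavePath}.

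Since the corollary rests entirely on \thmref{logconcavePath} and a standard closure property, there is no genuine obstacle in this step; the real work lies upstream in establishing the geometric-convolution representation itself. Accordingly the write-up of the corollary can be limited to citing \thmref{logconcavePath}, recalling the log-concavity of $\Geo(p)$, and invoking Hoggar's theorem.
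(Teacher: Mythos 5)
Your proposal is correct and matches the paper's argument exactly: the paper likewise derives the corollary immediately from \thmref{logconcavePath} by noting that $\Geo(p)$ is log-concave and that convolutions of log-concave functions are log-concave by Hoggar's theorem (\facref{unimodal}). Nothing further is needed.
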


\medskip
Note that \thmref{logconcavePath} follows directly from
Theorem 1.2 of \citet{Fill2009a}.
As \thmref{logconcavePath} is a crucial ingredient for proving our main result
(\thmref{cube}) for the hypercube, we give a different alternative proof of the statement.
While \citeauthor{Fill2009a}'s proof
is purely stochastic, our proof is based on functional analysis of the $z$\nobreakdash-transform.
Our analysis for the discrete-time random walk should also
be compared with \citeauthor{Keilson}'s analysis of the
continuous-time process~\cite{Keilson}.
The continuous-time process was independently considered even
earlier by \citet{KarlinMcGregor59}.

Before proving the theorem, we will show how to obtain $\ff_{0,d}(t)$ by a recursive argument.
To do this, suppose we are at node~$i\in\pathN$. The next step is a loop with probability~$\lp_i$.
Moreover, the next subsequent non-loop move ends at $i+1$ with probability
$\frac{\rp_i}{1-\lp_i}$ and at $i-1$ with probability
$\frac{1-\rp_i-\lp_i}{1-\lp_i}$.
Thus, for all $i\in \pathN$,
\begin{align*}
  \ff_{i,i+1}(t) = \frac{\rp_i}{1-\lp_i} \cdot \Geo(1-\lp_i)(t)
                           + \frac{1-\rp_i-\lp_i}{1-\lp_i}\cdot (\Geo(1-\lp_i)*\ff_{i-1,i}*\ff_{i,i+1})(t),
\end{align*}
with corresponding $z$\nobreakdash-transform
\begin{align*}
  \FF_{i,i+1}(z) = \frac{\rp_i}{1-\lp_i} \cdot \frac{1-\lp_i}{z-\lp_i}
                          + \frac{1-\rp_i-\lp_i}{1-\lp_i} \cdot \frac{1-\lp_i}{z-\lp_i}
                              \cdot \FF_{i-1,i}(z) \cdot \FF_{i,i+1}(z).
\end{align*}
Rearranging terms yields
\begin{align}
\label{eq:f:recF}
  \FF_{i,i+1}(z) =\frac{ \rp_i}
                                 {                                    z-\lp_i-(1-\rp_i-\lp_i)\cdot \FF_{i-1,i}(z)
                                                                      },
\end{align}
for all $i\in\pathN$. So $ \FF_{i,i+1}(z)$ is obtained recursively with
$   \FF_{0,1}(z) = \frac{\rp_0}{z-(1-\rp_0)}.
$ Finally, the $z$\nobreakdash-transform of \eq{faltung} is
\begin{align}
\label{eq:f:FF0N}
   \FF_{0,d}(z) =  \FF_{0,1}(z) \cdot \FF_{1,2}(z) \cdot \ldots \cdot \FF_{d-1,d}(z).
\end{align}
We will now prove some properties of $ \FF_{i,i+1}(z)$ for $i\in\pathN$.
\medskip
\begin{lem}
    \label{lem:dec}
    Except for singularities, $\FF_{i,i+1}(z)$     is monotone decreasing in $z$,
    for all $i\in\pathN$.
\end{lem}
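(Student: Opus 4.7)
The plan is to prove this by induction on $i$, exploiting the recursive formula~\eqref{eq:f:recF} and the fact that each downward probability $1 - \rp_i - \lp_i$ is non-negative.

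For the base case $i = 0$, one has $\FF_{0,1}(z) = \rp_0/(z - (1-\rp_0))$, so a direct differentiation gives $\FF_{0,1}'(z) = -\rp_0/(z-(1-\rp_0))^2 < 0$ at every $z \neq 1-\rp_0$, which is the only singularity. For the inductive step, assume $\FF_{i-1,i}(z)$ is monotone decreasing in $z$ between its singularities. Rewrite \eqref{eq:f:recF} as
\[
    \FF_{i,i+1}(z) = \frac{\rp_i}{D_i(z)}, \qquad D_i(z) := z - \lp_i - (1-\rp_i-\lp_i) \, \FF_{i-1,i}(z).
\]
I would then observe that $D_i'(z) = 1 - (1-\rp_i-\lp_i) \, \FF_{i-1,i}'(z)$, and since $1 - \rp_i - \lp_i \geq 0$ (it is a probability) and $\FF_{i-1,i}'(z) \leq 0$ by the inductive hypothesis, it follows that $D_i'(z) \geq 1 > 0$. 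Hence $D_i$ is strictly monotone increasing on each interval where it is defined and non-zero.

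The conclusion then comes from differentiating $\FF_{i,i+1}(z) = \rp_i/D_i(z)$:
\[
    \FF_{i,i+1}'(z) = -\frac{\rp_i \, D_i'(z)}{D_i(z)^2} < 0
\]
wherever $D_i(z) \neq 0$, since $\rp_i > 0$ by assumption. The zeros of $D_i$ together with the singularities of $\FF_{i-1,i}$ constitute exactly the singularities of $\FF_{i,i+1}$, so between any two consecutive singularities (and on the unbounded intervals at either end), $\FF_{i,i+1}$ is strictly monotone decreasing.

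The only subtle point is interpreting ``monotone decreasing'' correctly: because $D_i$ can change sign (becoming zero produces a pole), the function $\FF_{i,i+1}$ is monotone decreasing on each connected component of its domain rather than globally, which matches the ``except for singularities'' qualifier in the statement. Apart from that bookkeeping, the argument is a short induction and no serious obstacle arises.
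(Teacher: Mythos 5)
Your proposal is correct and takes essentially the same route as the paper: induction on $i$, the explicit base case $\FF_{0,1}(z)=\rp_0/(z-(1-\rp_0))$, and the observation that since $1-\rp_i-\lp_i\ge 0$ and $\FF_{i-1,i}$ is decreasing, the denominator of \eqref{eq:f:recF} is increasing in $z$, whence its reciprocal times $\rp_i>0$ is decreasing between singularities. The paper states this monotonicity argument directly rather than via derivatives, but the content is identical.
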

\begin{proof}
We will show the claim by induction on~$i$.
It is easy to see that the claim holds for the base case ($i=0$) since $\FF_{0,1}(z) = \frac{\rp_0}{z-(1-\rp_0)}$.  Assume inductively that the claim holds for $\FF_{i-1,i}(z)$.
With $1-\rp_i-\lp_i\ge 0$ this directly implies that the denominator
of  \eq{f:recF}
is increasing in~$z$. The claim for $\FF_{i,i+1}(z)$ follows.
\end{proof}
\begin{lem}
    \label{lem:numpoles}
    For all $i\in\pathN$,
    $\FF_{i,i+1}(z)$ has exactly~$i+1$ poles which are all in the interval~$(0,1)$.
    The poles of $\FF_{i,i+1}(z)$
    are distinct from the poles of $\FF_{i-1,i}(z)$.
\end{lem}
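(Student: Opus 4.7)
The plan is to prove both assertions simultaneously by induction on $i$, exploiting the recursion \eqref{eq:f:recF}, which writes
\[
    \FF_{i,i+1}(z) = \frac{\rp_i}{h_i(z)}, \qquad h_i(z) := z - \lp_i - (1-\rp_i-\lp_i)\,\FF_{i-1,i}(z),
\]
so that the poles of $\FF_{i,i+1}$ are precisely the zeros of $h_i$. For the base case $i=0$, $\FF_{0,1}(z)=\rp_0/(z-(1-\rp_0))$ has the single simple pole $1-\rp_0\in(0,1)$, since $\rp_0\le 1-\lp_0\le 1/2$ under the standing hypothesis $\lp_0\ge 1/2$.

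For the inductive step, let $\FF_{i-1,i}$ have $i$ simple poles $0<p_1<\cdots<p_i<1$. By \lemref{dec}, $\FF_{i-1,i}$ is strictly decreasing on each of the $i+1$ pole-free intervals, hence $h_i$ is strictly increasing on each of them (assuming the generic case $1-\rp_i-\lp_i>0$). Monotonicity forces $\FF_{i-1,i}(p_k^-)=-\infty$ and $\FF_{i-1,i}(p_k^+)=+\infty$, so $h_i$ jumps from $+\infty$ to $-\infty$ at every $p_k$; together with $h_i(z)\to\pm\infty$ as $z\to\pm\infty$, the Intermediate Value Theorem delivers exactly one simple zero of $h_i$ in each interval, giving the required $i+1$ simple poles.

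Confining these zeros to $(0,1)$ requires three estimates. The $i-1$ middle intervals already lie in $(0,1)$. For the rightmost interval $(p_i,\infty)$, $\FF_{i-1,i}(1)\le 1$ (it is a sum of first-passage probabilities), so $h_i(1)\ge 1-\lp_i-(1-\rp_i-\lp_i)=\rp_i>0$, placing the corresponding zero in $(p_i,1)$. The main obstacle is the leftmost interval $(-\infty,p_1)$, where I need $h_i(0)<0$, i.e., a tight bound on $|\FF_{i-1,i}(0)|$. I would handle this with a parallel induction establishing
\[
    -1 \le \FF_{j-1,j}(0) < 0 \qquad \text{for all } j\ge 1.
\]
The base case is $\FF_{0,1}(0)=-\rp_0/(1-\rp_0)\in[-1,0)$. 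For the inductive step, $\FF_{j-1,j}(0)=\rp_{j-1}/D_{j-1}$ with $D_{j-1}=-\lp_{j-1}-(1-\rp_{j-1}-\lp_{j-1})\,\FF_{j-2,j-1}(0)$; combining $\FF_{j-2,j-1}(0)\in[-1,0)$ with $\lp_{j-1}\ge 1/2$ shows $D_{j-1}\in(-\lp_{j-1},-\rp_{j-1}]$, which preserves the interval. Plugging $|\FF_{i-1,i}(0)|\le 1$ back into $h_i(0)$ yields $h_i(0)\le 1-\rp_i-2\lp_i\le-\rp_i<0$, so the leftmost zero sits in $(0,p_1)\subset(0,1)$.

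For the distinctness claim, at any pole $p_k$ of $\FF_{i-1,i}$ the function $h_i$ diverges to $\pm\infty$, so $\FF_{i,i+1}(p_k)=\rp_i/h_i(p_k)=0$; hence $p_k$ is a zero, not a pole, of $\FF_{i,i+1}$, and the two pole sets are disjoint.
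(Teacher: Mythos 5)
Your proof is correct and follows essentially the same route as the paper: induction on $i$ via the recursion \eqref{eq:f:recF}, using \lemref{dec} to make the denominator strictly increasing between the poles of $\FF_{i-1,i}$, the intermediate value theorem to get exactly one new pole per interval, boundary evaluations at $z=0$ (via the auxiliary induction $\FF_{j-1,j}(0)\in[-1,0)$, which matches the paper's preliminary bound $\FF_{j-1,j}(0)\ge -1$) and at $z=1$ to confine all poles to $(0,1)$, and the blow-up of the denominator at old poles for distinctness. The only cosmetic deviations are that you bound $\FF_{i-1,i}(1)\le 1$ probabilistically where the paper proves $\FF_{i-1,i}(1)=1$ by the same algebraic induction, and you evaluate the denominator only at the points $0$ and $1$ rather than bounding $\FF_{i-1,i}$ on $(-\infty,0)$ and $(1,\infty)$; both are equivalent in substance, and your explicit caveat about the case $1-\rp_i-\lp_i>0$ matches an assumption the paper makes implicitly.
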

\begin{proof}
    Before proving the claims of the lemma, we will show that
    $\FF_{i,i+1}(0)\ge-1$ and $\FF_{i,i+1}(1)=1$ for all $i\in\pathN$. Observe, that
    $\FF_{0,1}(0)=\frac{\rp_0}{-(1-\rp_0)}=\frac{1-\lp_0}{-\lp_0}\ge -1$, since $\lp_0\ge \frac{1}{2}$.
    Also observe that $\FF_{0,1}(1)=1$.
    Assume, inductively that $\FF_{i-1,i}(0)\ge-1$ and $\FF_{i-1,i}(1)=1$. Then with \eq{f:recF},
    $\FF_{i,i+1}(0) \ge \frac{\rp_i}{-\lp_i-(1-\rp_i-\lp_i)\cdot(-1)}
    =\frac{\rp_i}{1-2\lp_i-\rp_i}\ge -1$, since $1-2\lp_i\le 0$.
    Moreover, $\FF_{i,i+1}(1)=\frac{\rp_i}{1-\lp_i-(1-\lp_i-\rp_i)}=1$.
    Thus, $\FF_{i,i+1}(0)\ge-1$ and $\FF_{i,i+1}(1)=1$ for all $i\in\pathN$.

    We will now show the claims of the lemma by induction.
    For the base case, observe that $\FF_{0,1}(z) = \frac{\rp_0}{z-(1-\rp_0)}$ has one pole at
    $z=1-\rp_0>0$ and $\FF_{-1,0}$ is not defined. This implies the claim for $i=0$. Suppose the claim
    holds for $\FF_{i-1,i}(z)$, and let $z_1,z_2, \ldots z_i$ be the poles
    of  $\FF_{i-1,i}(z)$. Without loss of generality, we assume $0<z_1<z_2<\cdots<z_i<1$.
    Let $g_i(z)$ be the denominator of \eq{f:recF}, that is,
    \begin{align*}
        g_i(z):=z-\lp_i-(1-\rp_i-\lp_i)\cdot \FF_{i-1,i}(z).
    \end{align*}
    Observe that
    \begin{itemize}
    \item[(i)]
    $g_i(z)$ has the same set of poles as $\FF_{i-1,i}(z)$,
    \item[(ii)]
    $\lim_{z\rightarrow -\infty} g_i(z)= - \infty$, and
    \item[(iii)]
    $\lim_{z\rightarrow \infty} g_i(z)= \infty$.
    \end{itemize}
    By \eq{f:recF}, $\FF_{i,i+1}(z)$ has its poles at the zeros of
    $g_i(z)$.
   \lemref{dec} shows that in each interval
    $(z_j,z_{j+1})$ with $1\le j \le i-1$,
    $g_i(z)$ is increasing in~$z$. Using
        fact (i)
    this implies that $g_i(z)$ has exactly one zero in each interval $(z_j,z_{j+1})$.
    Thus $\FF_{i,i+1}(z)$ has exactly
    one pole in each interval $(z_j,z_{j+1})$.
    Similarly, \lemref{dec}, together with
        facts (i), (ii) and (iii),
    implies that $\FF_{i,i+1}(z)$ has exactly
    one pole, say $z'$, in the interval $[-\infty,z_1)$
    and
    one pole, say~$z''$, in the interval $(z_i,\infty]$ .
    This implies that $\FF_{i,i+1}(z)$ has exactly $i+1$ poles
    which are all distinct from the poles of $\FF_{i-1,i}(z)$. It remains to
    show that $z'>0$ and $z''<1$.

    Since $\FF_{i-1,i}(0)\ge-1$ and $\lim_{z\rightarrow -\infty}\FF_{i-1,i}(z)=-0$, it
    follows from \lemref{dec} that $-1\le \FF_{i-1,i}(z) \le 0$ for all real $z<0$. So
        $g_i(z)<0$
    for all real $z<0$. It follows that $z'>0$.
    Similarly, since $\FF_{i-1,i}(1)=1$ and  $\lim_{z\rightarrow \infty}\FF_{i-1,i}(z)=+0$,
    it  follows by \lemref{dec} that $0\le \FF_{i-1,i}(z) \le 1$ for all real $z> 1$.
    So
    $g_i(z)>0$ for all real $z>1$. It follows that $z''<1$.
    This finishes the proof of our inductive step. The claim follows.
\end{proof}

The distinctness of the $i+1$ poles of
   $F_{i-1,i}(z)$, established in \lemref{numpoles}, is crucial for the proof of the following lemma.
\begin{lem}
    \label{lem:factor}
    Let $(b_{j,i})_{j=0}^{i}$ be the poles of $\FF_{i,i+1}(z)$, $i\in\pathN$, and
    define $P_i(z)=\prod_{j=0}^{i} (z-b_{j,i})$.
    Then $\FF_{i,i+1}(z)=\rp_i \cdot \frac{P_{i-1}(z)}{P_{i}(z)}$ for all $i\in\pathN$.
\end{lem}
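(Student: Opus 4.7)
The plan is a straightforward induction on $i$, leveraging the recursion \eqref{eq:f:recF} together with \lemref{numpoles}.

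For the base case $i=0$, use the explicit formula $\FF_{0,1}(z) = \frac{\rp_0}{z-(1-\rp_0)}$, so the unique pole is $b_{0,0}=1-\rp_0$ and $P_0(z)=z-(1-\rp_0)$. Taking the convention that the empty product $P_{-1}(z)=1$, we get $\FF_{0,1}(z)=\rp_0\cdot P_{-1}(z)/P_0(z)$, as desired.

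For the inductive step, assume $\FF_{i-1,i}(z)=\rp_{i-1}\cdot P_{i-2}(z)/P_{i-1}(z)$. Substitute this into \eqref{eq:f:recF} and multiply numerator and denominator by $P_{i-1}(z)$ to obtain
\[
 \FF_{i,i+1}(z) \;=\; \frac{\rp_i\,P_{i-1}(z)}{(z-\lp_i)\,P_{i-1}(z) \;-\; (1-\rp_i-\lp_i)\,\rp_{i-1}\,P_{i-2}(z)}.
\]
Call the denominator $Q_i(z)$. Since $P_{i-1}$ is monic of degree $i$ and $P_{i-2}$ has degree $i-1$, the polynomial $Q_i$ is monic of degree $i+1$. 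By \lemref{numpoles}, $\FF_{i,i+1}(z)$ has exactly $i+1$ poles, namely $b_{0,i},\ldots,b_{i,i}$; since these poles must be zeros of $Q_i(z)$, and $Q_i$ is a monic polynomial of degree $i+1$, it follows that $Q_i(z)=\prod_{j=0}^{i}(z-b_{j,i})=P_i(z)$. Hence $\FF_{i,i+1}(z)=\rp_i\cdot P_{i-1}(z)/P_i(z)$.

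The only mild subtlety to check is that no cancellation occurs between $P_{i-1}(z)$ in the numerator and $P_i(z)$ in the denominator, i.e., that the numerator and denominator share no common roots. But \lemref{numpoles} already guarantees that the poles of $\FF_{i,i+1}$ are distinct from those of $\FF_{i-1,i}$, so $P_i$ and $P_{i-1}$ have disjoint zero sets. This is the main (and only) nonroutine point, and it is supplied directly by the preceding lemma; the rest is a one-line algebraic verification.
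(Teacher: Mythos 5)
Your proof is correct and follows essentially the same route as the paper: induction on $i$, substituting the inductive hypothesis into the recursion \eqref{eq:f:recF}, clearing denominators, noting the resulting denominator is monic of degree $i+1$, and invoking \lemref{numpoles} to identify it with $P_i(z)$. Your extra remark about non-cancellation is a small explicit check that the paper leaves implicit, but it does not change the argument.
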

\begin{proof}
    Our proof is by induction on $i$.
    For the base case ($i=0$), observe that $P_{-1}(z)=1$ and thus $\FF_{0,1}(z)$ has the desired form.
    Suppose the claim holds for $\FF_{i-1,i}(z)$. Then \eq{f:recF} implies
    \begin{align}
    \FF_{i,i+1}(z)
       &= \frac{\rp_i}
                     {z-\lp_i-(1-\rp_i-\lp_i)\cdot \rp_{i-1}\cdot\frac{P_{i-2}(z)}{P_{i-1}(z)}}
        \nonumber\\&
        = \frac{\rp_i\cdot P_{i-1}(z)}
            {(z-\lp_i)\cdot P_{i-1}(z)- (1-\rp_i-\lp_i)\cdot
            \rp_{i-1}\cdot P_{i-2}(z)}\label{eq:f:factor1}.
    \end{align}
    Observe that $(z-\lp_i)\cdot P_{i-1}(z)$  is a polynomial of degree~$i+1$ where
    the leading term has a coefficient of~$1$. This also holds for the denominator
    of \eq{f:factor1}, since there, we only subtract a polynomial of order $i-1$.
    By \lemref{numpoles} we know that $\FF_{i,i+1}(z)$ has exactly~$i+1$ real
    positive poles which are all distinct.  
    So the denominator of  \eq{f:factor1} has exactly $d+1$ zeros -- the $b_{j,i}$'s.
    The only polynomial of order $i+1$ that has exactly those zeros and leading coefficient $1$ is 
     $P_{i}(z)$.
    It follows that the denominator of \eq{f:factor1} is equal to
    $P_{i}(z)$.
    The claim follows.
\end{proof}
We are now ready to prove \thmref{logconcavePath}.
\begin{proof}[Proof of \thmref{logconcavePath}]
By \eq{f:FF0N} and
\lemref{factor}, we get
\begin{align*}
\FF_{0,d}(z)
= \prod_{i=0}^{d-1} \FF_{i,i+1}(z)
= \prod_{i=0}^{d-1} \left(\rp_i \cdot \frac{P_{i-1}(z)}{P_{i}(z)}\right)
= \frac{\prod_{i=0}^{d-1} \rp_i}{P_{d-1}(z)}
= K_d
              \cdot \prod_{i=0}^{d-1} \frac{1-b_{i,d-1}}{z-b_{i,d-1}},
\end{align*}
where $(b_{i,d-1})_{i=0}^{d-1}$ are the poles of
$\FF_{d-1,d}(z)$ as defined in \lemref{factor} and
$K_d=\prod_{i=0}^{d-1}\frac{\rp_i}
                 {1-b_{i,d-1}}$.
By \lemref{numpoles}, $b_{i,d-1}\in (0,1)$ for all~$i$.
Now for each~$i$ the term $\frac{1-b_{i,d-1}}{z-b_{i,d-1}}$ is the $z$\nobreakdash-transform of
the geometric distribution with parameter $1-b_{i,d-1}$, i.e., $\Geo(1-b_{i,d-1})(t)$.

Thus, $\ff_{0,d}(t)$ can be expressed as the convolution of~$d$ independent geometric distributions
\begin{align*}
\ff_{0,d}(t) = K_d \cdot [\Geo(1-b_{0,d-1}) * \Geo(1-b_{1,d-1}) * \ldots * \Geo(1-b_{d-1,d-1})](t).
\end{align*}
Moreover, since $\ff_{0,d}$ is a probability distribution over~$t$ and the convolution of probability
distributions is again a probability distribution, we have $K_d=1$.
The theorem follows.
\end{proof}
It should be noted that it follows from \cite[Theorem 1.2]{Fill2009a} that the
parameters $(b_{i,d-1})_{i=0}^{d-1}$ in the geometric distributions are the
eigenvalues of the underlying transition matrix.

Recall that our aim is to prove unimodality for the function $\bP_{0,j}^{t}$ (in $t$). Using the simple convolution formula $\bP_{0,j} = \ff_{0,j} * \bP_{j,j}$ and the log-concavity of $\ff_{0,j}$, it suffices to prove that $\bP_{j,j}$ is unimodal (cf.~\facref{unimodal}). Next, we will prove that $\bP_{j,j}$ is also non-increasing in $t$.

\begin{lem}
\label{lem:monotone}
Let $\bP$ be the $(d+1)\times(d+1)$-transition matrix defining an ergodic Markov chain on a path $\mathcal{P}=(0, \ldots, d)$.
If $\bP_{ii}\ge\frac{1}{2}$ for all $0 \leq i \leq d$
then for all
    $0 \leq i \leq d$, $ \bP_{i,i}^{t} $ is non-increasing in $t$.
\end{lem}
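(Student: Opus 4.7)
The plan is to use spectral decomposition: the condition $\bP_{i,i}\geq 1/2$ forces every eigenvalue of $\bP$ to lie in $[0,1]$, and then $\bP^{t}_{i,i}$ expands as a non-negative combination of powers $\lambda_k^{t}$, each of which is non-increasing in $t$.

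First I would invoke the fact that $\bP$ is tridiagonal (it is a nearest-neighbour walk on the path $\mathcal{P}$), so the Markov chain is a birth-death chain. Any such chain is reversible: define $\pi_0>0$ arbitrarily and recursively $\pi_{i+1} = \pi_i\,\bP_{i,i+1}/\bP_{i+1,i}$ (all denominators are positive by ergodicity), then normalize. Detailed balance $\pi_i\bP_{i,j}=\pi_j\bP_{j,i}$ holds, so the symmetrization $\bA := D^{1/2}\bP D^{-1/2}$, with $D=\operatorname{diag}(\pi)$, is a symmetric matrix with the same eigenvalues as $\bP$. Moreover $\bA^{t}_{i,i} = D^{1/2}_{i,i}\,\bP^{t}_{i,i}\,D^{-1/2}_{i,i} = \bP^{t}_{i,i}$, so it suffices to analyse $\bA^{t}_{i,i}$.

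Next, I would use the loop-probability assumption. Setting $\bQ := 2\bP - I$, the diagonal entries are $2\bP_{i,i}-1\geq 0$, the off-diagonal entries are $2\bP_{i,j}\geq 0$, and each row sums to $2\cdot 1-1 = 1$, so $\bQ$ is a stochastic matrix. Its eigenvalues therefore lie in $[-1,1]$, and since $\bP = \tfrac{1}{2}(I+\bQ)$, every eigenvalue of $\bP$ lies in $[0,1]$.

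Finally, by the spectral theorem applied to the symmetric matrix $\bA$, there is an orthonormal basis of eigenvectors $(u_k)_{k=0}^{d}$ with real eigenvalues $\lambda_k \in [0,1]$, so
\[
    \bP^{t}_{i,i} \;=\; \bA^{t}_{i,i} \;=\; \sum_{k=0}^{d} \lambda_k^{t}\,u_k(i)^{2}.
\]
Each summand has $u_k(i)^{2}\geq 0$ and $\lambda_k^{t}$ is non-increasing in $t$ (using the convention $0^{0}=1$, which makes the $t=0$ value equal $\sum_k u_k(i)^2 = 1$ and the $t=1$ value equal $\bP_{i,i}\leq 1$). Summing non-increasing non-negative sequences preserves monotonicity, which yields the claim. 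I do not anticipate a real obstacle: the only conceptual step is recognising that $\bP_{i,i}\geq 1/2$ is exactly the ``lazy chain'' condition that rules out negative eigenvalues, after which the argument is routine spectral theory for reversible chains.
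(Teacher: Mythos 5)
Your proof is correct and follows essentially the same route as the paper: reversibility of the chain on a path, real eigenvalues confined to $[0,1]$ by the laziness condition $\bP_{ii}\ge\tfrac12$, and the spectral expansion $\bP^{t}_{i,i}=\sum_k \lambda_k^{t}\,u_k(i)^2$, which is non-increasing termwise. The only differences are cosmetic: you obtain the eigenvalue bound via the lazy-chain factorization $\bP=\tfrac12(I+\bQ)$ with $\bQ$ stochastic instead of the paper's Ger\u{s}gorin-disc argument, and your explicit symmetrization $D^{1/2}\bP D^{-1/2}$ makes rigorous the orthonormal-eigenbasis step that the paper invokes more informally.
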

\begin{proof}
It is well known that ergodic Markov chains on paths are time reversible (see e.g. Section 4.8 of \citet{Ro07}).
To see this, let $\pi=(\pi_0, \ldots, \pi_d)$ be the stationary distribution. Then for all
$0\le i\le d-1$ the rate at which the process goes from $i$ to $i+1$ (namely, $\pi_i \bP_{i,i+1}$) is
equal to the rate at which the process goes from $i+1$ to $i$ (namely, $\pi_{i+1} \bP_{i+1,i}$).
Thus, $\bP$ is time-reversible.

One useful property of a time-reversible matrix is that its eigenvalues are all real.
The Ger\u{s}gorin disc theorem states that every eigenvalue $\lambda_j$, $0\le j \le d$,
satisfies the condition
\[
   |\lambda_j - \bP_{ii}| \le 1-\bP_{ii},
\]
for some $0 \le i \le d$.
Since $\bP_{ii}\ge\frac{1}{2}$, this directly implies that
all eigenvalues are in the interval $[0,1]$.

It is well-known
that there is an orthonormal base of $\R^{d+1}$ which is formed by the eigenvectors $v_{0},\,v_{1},\,\ldots,\,v_{d}$ (see e.g.~\cite{Gur00}).
Then for any $n$-dimensional vector $w \in \R^{d+1}$,
$w= \sum_{j=0}^{d} \langle w, v_j \rangle \, v_j$,
where $\langle \thinspace\cdot\thinspace,\thinspace\cdot\thinspace \rangle$
denotes the inner product.
Applying this to the $i$\nobreakdash-th unit vector $e_{i}$ and using $[\thinspace\cdot\thinspace]_{i}$
to denote the $i$\nobreakdash-th entry of a vector in $\R^{d+1}$, we obtain
\[
    e_{i}
    = \sum_{j=0}^{d} \langle e_i, v_j \rangle \, v_j
   = \sum_{j=0}^d [v_{j}]_i v_{j}.
\]
Thus,
\begin{align*}
    \bP^{t} e_{i}
    = \bP^{t} \, \bigg( \sum_{j=0}^d [v_{j}]_i \, v_{j} \bigg)
    = \sum_{j=0}^d [v_{j}]_i \, \bP^{t} \, v_{j}
    = \sum_{j=0}^d [v_{j}]_i \, \lambda^{t}_j v_j
\end{align*}
and finally
\begin{align*}
    \bP_{i,i}^{t}
    = \left[ \bP^{t} e_{i} \right]_{i}
    = \sum_{j=0}^{d} [v_{j}]_i \, \lambda_j^{t} [v_j]_{i}
    = \sum_{j=0}^d \lambda_j^t \, [v_j]_{i}^2,
\end{align*}
which is non-increasing in $t$ as $[v_j]_{i} \in \mathbb{R}$ and $0 \le \lambda_j \le 1$ for all $0\le j \le d$.
\end{proof}

\subsection{Unimodal transition probabilities on the hypercube}
Combining \lemref{monotone} and \thmref{logconcavePath} and then projecting the random walk
on the hypercube on a random walk on a path, we obtain the following result.
\begin{thm}\label{thm:cubeunimodal}
Let $i,j \in V$ be two vertices of a $d$\nobreakdash-dimensional hypercube. Then,
$\bP_{i,j}(t)$ is unimodal.
\end{thm}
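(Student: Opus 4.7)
The plan is to exploit the symmetry of the hypercube to reduce the statement to a question about a birth--death chain on the path $\mathcal{P}=(0,1,\ldots,d)$, and then combine the two main ingredients already proved in the section: the log-concavity of first-passage times (\corref{logconcavePath}) and the monotonicity of diagonal transition probabilities (\lemref{monotone}).

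Step 1 (projection). By vertex-transitivity of the hypercube, $\bP_{i,j}(t)$ depends only on $k:=\dist(i,j)$, the Hamming distance between $i$ and $j$. Consider the random walk on the hypercube as a walk on the bit-string and project it to the Hamming-weight process $Y_t=|X_t|$. This is a birth--death chain on $\mathcal{P}$ with transition matrix $\bar\bP$ given by $\bar\bP_{k,k+1}=(d-k)/(2d)$, $\bar\bP_{k,k-1}=k/(2d)$, and $\bar\bP_{k,k}=1/2$ (the loop probability is exactly $1/2$ since on the hypercube we have $\bP_{v,v}=1-d/(2d)=1/2$). Averaging over the $\binom{d}{k}$ equally-likely vertices at distance $k$ yields
\[
   \bP_{i,j}(t)\;=\;\frac{\bar\bP_{0,k}(t)}{\binom{d}{k}},
\]
so it suffices to prove that $\bar\bP_{0,k}(t)$ is unimodal in $t$.

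Step 2 (convolution decomposition). I would decompose $\bar\bP_{0,k}(t)$ by conditioning on the first time the chain hits $k$:
\[
   \bar\bP_{0,k}(t)\;=\;\bigl(\ff_{0,k}\ast\bar\bP_{k,k}\bigr)(t),
\]
where $\ff_{0,k}$ is the first-passage probability from $0$ to $k$ for the chain $\bar\bP$, and the convolution on the right-hand side records the time $s$ of first arrival plus a return to $k$ after $t-s$ further steps.

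Step 3 (apply previous results). Since $\bar\bP$ has all loop probabilities equal to $1/2\geq 1/2$, \corref{logconcavePath} applies and gives that $\ff_{0,k}$ is log-concave. Next, \lemref{monotone}, which requires exactly the same hypothesis $\bar\bP_{ii}\geq 1/2$, yields that $\bar\bP_{k,k}(t)$ is non-increasing in $t$ and hence in particular unimodal. Invoking \facref{unimodal}(iii), the convolution of a log-concave function and a unimodal function is unimodal, so $\bar\bP_{0,k}(t)$ is unimodal in $t$. This completes the proof after Step~1.

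The only real obstacle is setting up the projection cleanly and checking the loop probability of the projected chain is the same $1/2$ as on the hypercube, so that both \corref{logconcavePath} and \lemref{monotone} apply. Everything else is a direct combination of the results already proved.
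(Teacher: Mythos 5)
Your proposal is correct and follows essentially the same route as the paper's proof: project the lazy hypercube walk to the Ehrenfest (Hamming-weight) chain on the path, write $\bar\bP_{0,k}=\ff_{0,k}*\bar\bP_{k,k}$, and combine the log-concavity of $\ff_{0,k}$ (\corref{logconcavePath}) with the monotonicity of $\bar\bP_{k,k}(t)$ (\lemref{monotone}) via \facref{unimodal}, finishing with the symmetry identity $\bP_{i,j}(t)=\bar\bP_{0,k}(t)/\binom{d}{k}$. (Your projected transition probabilities $(d-k)/(2d)$ and $k/(2d)$ are in fact the correct ones.)
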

\begin{proof}
	We use the following projection of a random walk on a $d$-dimensional hypercube with loop probability $1/2$ to a
    random walk on a path with $d$ vertices, again with loop probability $1/2$.
    The induced random walk is
    obtained from the mapping $x \mapsto |x|_1$, that is, vertices in $\{0,1\}^d$ with
    the same number of ones are equivalent. It is easy to check that this new random
    walk is a random walk on a path with vertices $0,1,\ldots,d$ that moves right
    with probability $\lambda_k=\frac{d-k}{2k}$, left with probability
    $\mu_k=\frac{d}{2k}$, and loops with probability~$\frac{1}{2}$. (This process is also known as the Ehrenfest chain~\cite{GS01}).

    Consider now the random walk on the path with vertex set $\{ 0,1,\ldots,d \}$ and
    let $j$~be an arbitrary number with $0 \leq j \leq d$.
    Recall that $\bP_{0,j}$ can be expressed as a convolution (cf.~\cite{GS01}) of~$\bP$ and~$\ff$ as follows,
    \begin{align*}
       \bP_{0,j} = \ff_{0,j} * \bP_{j,j}.     \end{align*}
    By \corref{logconcavePath},
    $\ff_{0,j}(t)$ is log-concave. Moreover, \lemref{monotone} implies that $\bP_{j,j}(t)$ is non-increasing
    in $t$ and hence unimodal.
    As the convolution
    of any log-concave function with any unimodal function is again unimodal~(cf.~\facref{unimodal}),
    it follows that $\bP_{0,j}(t)$ is unimodal in~$t$.

    Now fix two vertices $i,j$ of the $d$\nobreakdash-dimensional hypercube. By symmetry, we may
    assume that $i=0^d \equiv 0$. Conditioned on the event that the projected random
    walk is located at a vertex with $|j|_1$ ones at step~$t$, every
    vertex with $|j|_1$ ones is equally likely. This gives
    $\bP_{0,j}(t) = \bP_{0,|j|_1}(t) / \binom{d}{|j|_1}$, and therefore the unimodality of
    $\bP_{0,|j|_1}(t)$ implies directly the unimodality of
    $\bP_{0,j}(t)$, as required.
\end{proof}

With more direct methods, one can prove the following supplementary result
giving further insights into the distribution of $\bP_{i,j}(t)$.
As the result is not required for our analysis,
the proof is given in the appendix.
\newcommand{\textpromonotonicity}{
Let $i,j\in V$ be two vertices of the  $d$\nobreakdash-dimensional hypercube with $\dist(i,j) \geq d/2$. Then
$\bP_{i,j}(t)$ is monotone increasing.
}
\mypro{monotonicity}{\textpromonotonicity}

\subsection{Analysis of the discrete algorithm}

We are now ready to prove our main result for hypercubes.
\begin{proof}[Proof of \thmref{cube}]
    By symmetry, it suffices to bound the deviation at the vertex $0 \equiv 0^d$.
    Hence by \eq{StandardAnsatz} we have to bound
    \begin{align*}
        \big| x_0^{(t)} - \xi_0^{(t)} \big|
        &\leq \big|
        \tsum_{s=0}^{t-1}
        \tsum_{[i:j] \in E} e_{i,j}^{(t-s)} (\bP_{0,i}(s) - \bP_{0,j}(s)) \big| \\
        &\leq \big|
        \tsum_{s=0}^{t-1}
        \tsum_{[i:j] \in E} e_{i,j}^{(t-s)} \bP_{0,i}(s)
        \big| +
        \big|
        \tsum_{s=0}^{t-1}
        \tsum_{[i:j] \in E} e_{i,j}^{(t-s)} \bP_{0,j}(s) \big|\\
        &\leq
        \tsum_{[i:j] \in E}
        \big|
        \tsum_{s=0}^{t-1} e_{i,j}^{(t-s)}  \bP_{0,i}(s)
        \big| +
        \tsum_{[i:j] \in E}
        \big|
        \tsum_{s=0}^{t-1}
        e_{i,j}^{(t-s)} \bP_{0,j}(s)
        \big|.
    \end{align*}
    Using \thmref{cubeunimodal}, we know that the sequences $\bP_{0,i}(s)$ and $\bP_{0,j}(s)$ are unimodal in~$s$ and hence we can bound both summands by \lemref{betragkleinerk} (where $\ell=1$) to obtain that
    \begin{align}
     \big| x_0^{(t)} - \xi_0^{(t)} \big|
        &\leq 2 \Lambda \, \tsum_{[i:j] \in E}  \max_{s=0}^{t-1} \bP_{0,i}(s) +
        2 \Lambda \, \tsum_{[i:j] \in E}  \max_{s=0}^{t-1} \bP_{0,j}(s)\notag\\
        &= 2 \Lambda \, d\, \tsum_{i\in V}
         \max_{s=0}^{t-1}  \bP_{0,i}(s).  \label{eq:cubefirst}
    \end{align}

    \noindent
    To bound the last term, we conceptualize the random walk as the following process, similar to a balls-and-bins process. In
    each step $t \in \N$ we choose a coordinate $i \in \{1,\ldots,d\}$ uniformly at
    random. Then with probability~$1/2$ we flip the bit of this coordinate; otherwise
    we leave it unchanged (equivalently, we set the bit to~$1$ with
    probability~$1/2$ and to zero otherwise).

    Now we partition the random walk's distribution at step~$t$ according to the
    number of different coordinates chosen (not necessarily flipped) up to step~$t$.
    Consider $\bP_{0,x}(t)$ for a vertex $x \in \{0,1\}^d$. Note that by the symmetry of the hypercube, $\bP_{0,x}(t)$ is the same for all $x \in \{0,1\}^d$ with the same $|x|_1$. Hence, let us fix a value~$\ell$ with $0 \leq \ell \leq d$ and let us consider $\bP_{0,\ell}(t)$, which is the probability for visiting the vertex, say, $1^{\ell} 0^{d-\ell}$ from $0\equiv0^{d}$ in round $t$.
Since (i) the $k$ chosen coordinates must contain
    the $\ell$ bits which should be one and (ii) all $k$ chosen coordinates must be set to the
    correct value, we have
    \begin{equation}
       \bP_{0,\ell}(t) =
       \tsum_{k=\ell}^{d} \Pro{ \mbox{exactly $k$ coordinates chosen in $t$ steps}}
       \cdot 2^{-k}
       \, \binom{d-\ell}{k-\ell}
       \big/ \binom{d}{k}.
       \label{eq:balls}
    \end{equation}
    Using this to estimate $\bP_{0,i}(s)$, we can bound \eq{cubefirst} by
        \begin{align*}
        \big| x_0^{(t)} - \xi_0^{(t)} \big|
        &\leq 2 \Lambda \, d \, \tsum_{\ell=0}^d \binom{d}{\ell} \max_{s=0}^{\infty} \bP_{0,\ell}(s) \\
        &= 2 \Lambda \, d \, \cdot \Biggl( 1 + \tsum_{\ell=1}^d \binom{d}{\ell} \max_{s=0}^{\infty} \\
        &\quad\, \tsum_{k=\ell}^{d} \Pro{ \mbox{exactly $k$ coordinates chosen in $s$ steps}} \cdot \frac{ \binom{d-\ell}{k-\ell} }{ \binom{d}{k} } \cdot 2^{-k} \Biggr) \\
        &\leq 2 \Lambda \, d \, \cdot \Biggl( 1 +\tsum_{\ell=1}^d \max_{k=\ell}^{d} \left\{ \frac{ \binom{d-\ell}{k-\ell} \, \binom{d}{\ell} }{ \binom{d}{k} } \cdot 2^{-k} \right\} \Biggr) \\
           &= 2 \Lambda \, d \, \cdot \Biggl( 1 +\tsum_{\ell=1}^d   \max_{k=\ell}^{d} \left\{ \binom{k}{\ell} \cdot 2^{-k} \right\} \Biggr) \\
        &\leq 2 \Lambda \, d \, \cdot \Biggl( 1 +\tsum_{\ell=1}^d \max_{k=\ell}^{d} \left\{ \binom{k}{\lceil k/2 \rceil} \cdot 2^{-k} \right\} \Biggr) \\
        &\leq 2 \Lambda \, d \, \cdot \Biggl( 1 +\tsum_{\ell=1}^d \max_{k=\ell}^{d} \left\{ (1+o(1)) \cdot \frac{2^{k}}{\sqrt{\pi \, k}} \cdot 2^{-k} \right\} \Biggr) \\
        &= \Oh\left( \Lambda \, d \, \left(1 + \tsum_{\ell=1}^d \frac{1}{\sqrt{\ell}} \right) \right)
        = \Oh \big( \Lambda \, d^{3/2} \big), 
    \end{align*}
where we have used the simple consequence of Stirling's formula that $\binom{k}{\lceil \frac{k}{2} \rceil} \leq (1+o(1)) \cdot {2^k}/{\sqrt{\pi \, k}}$. This proves the first claim of the theorem.

    The second claim follows by the following simple construction.
     Define a load vector $x^{(0)}$ such that
    $x^{(0)}_v := d$ for all vertices $v=(v_1,v_2,\ldots,v_d)\in\{0,1\}^d$ with $v_1=0$, and $x^{(0)}_v := 0$ otherwise.
    Then for each edge $\{i,j\} \in E$ with $0 = i_1 \neq j_1$ the fractional flow at step $1$ is
    $
     \big(\bP_{i,j} x_i^{(0)} - \bP_{i,j} x_j^{(0)}\big) = +\frac{1}{2}.
    $
   Since in the first time step no rounding errors have yet been incurred, each edge is allowed to round up and down arbitrarily. Hence we can let all these edges round towards~$j$, i.e.,
    $\Phi_{i,j}^{(1)} := 1$ for each such edge $\{i,j\} \in E$. By definition, this implies for the corresponding rounding error that $e_{i,j}^{(1)} = -\frac{1}{2}$. Moreover, we have the following load distribution after step~$1$. We have  $x^{(1)}_v = 0$ for all vertices $v$ if $v_1=0$,
    and $x^{(1)}_v = d$ otherwise. Similarly, the fractional flow for each edge $\{i,j\} \in E$ with $0 = i_1 \neq j_1$ is
    $
     \big(\bP_{i,j} x_i^{(0)} - \bP_{i,j} x_j^{(0)}\big) = -\frac{1}{2}.
    $
    Since $e_{i,j}^{(1)} = -\frac{1}{2}$, $\big|\sum_{s=1}^2 e_{i,j}^{(s)}\big|$ will be minimized if $e_{i,j}^{(2)} = \frac{1}{2}$. Hence we can set $\Phi_{i,j}^{(2)}:=-1$. This implies that we end up in exactly the same situation as at the beginning: the load vector is the same and also the sum over the previous rounding errors along each edge is zero. We conclude that there is an instance of the BED algorithm for which $x^{(t)} = x^{(t \mod 2)}$, which proves the second claim of the \thmref{cube}.
\end{proof}

\section{Analysis on $d$-dimensional torus graphs}
\label{sec:torus}

We start this section with the formal definition of
a $d$\nobreakdash-dimensional torus. \begin{defi}
\label{def:torus}
A $d$\nobreakdash-dimensional torus $T(n_1,n_2,\ldots,n_d)$
with
$n=n_1\cdot n_2\cdot \ldots \cdot n_d$
vertices has vertex set
$V=\{0,1,\ldots,n_1-1\}\times
    \{0,1,\ldots,n_2-1\}\times
    \ldots
    \times
    \{0,1,\ldots,\linebreak[0]n_d-1\}$\linebreak[0] and every vertex
$(i_1,\linebreak[0]i_2,\linebreak[0]\ldots,\linebreak[0]i_d)\in V$ has $2d$ neighbors
$((i_1\pm 1)\bmod{n_1},\linebreak[0]i_2,\ldots,\linebreak[0]i_d)$,\linebreak[0]
$(i_1,(i_2\pm 1)\bmod{n_2},\linebreak[0]i_3,\ldots,i_d)$,
\ldots,\linebreak[0]
$(i_1,i_2,\linebreak[0]\ldots,i_{d-1},\linebreak[0](i_d\pm 1)\bmod{n_d})$. Henceforth, we will always assume that $d=\Oh(1)$.
We call a torus uniform if $n_1=n_2=\ldots=n_d=\oldsqrt[d]{n}$.
\end{defi}

Without loss of generality we will assume in the remainder that $n_1 \leq n_2
\leq \cdots \leq n_d$. By the symmetry of the torus, this does not restrict our
results.

Recall that $\lambda_2$ denotes the
second largest eigenvalue in absolute value.
Before we analyze the deviation between the idealized and discrete process, we
estimate $(1-\lambda_2)^{-1}$ for general torus graphs.
\begin{lem}\label{lem:nonuniformtorus}
    For a $d$\nobreakdash-dimensional torus $T=T(n_1,n_2,\ldots,n_d)$,
    $(1-\lambda_2)^{-1} = \Theta \left( n_d^2 \right)$.
\end{lem}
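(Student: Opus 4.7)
The plan is to diagonalize $\bP$ explicitly, using the Cartesian product structure of the torus, and then optimize over the resulting product of cosines. Because $\bP = \tfrac{1}{2} I + \tfrac{1}{2} M$ where $M$ is the simple random walk transition matrix (on the torus, every vertex has degree $2d$, so $\bP_{i,i} = 1/2$), all eigenvalues of $\bP$ lie in $[0,1]$ and so the second largest eigenvalue in absolute value is simply the second largest eigenvalue.

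The eigenvectors of $M$ on $T(n_1,\ldots,n_d)$ are tensor products of the Fourier eigenvectors on each cycle $C_{n_j}$, so the eigenvalues of $\bP$ are indexed by $(k_1,\ldots,k_d)$ with $0\le k_j\le n_j-1$ and are given by
\[
   \lambda_{k_1,\ldots,k_d} \;=\; \frac{1}{2} + \frac{1}{2d}\sum_{j=1}^d \cos\!\left(\frac{2\pi k_j}{n_j}\right).
\]
The largest eigenvalue is $1$, attained at $k_1=\cdots=k_d=0$. To identify the second largest, I would observe that $\lambda_{k_1,\ldots,k_d}$ is maximized over nonzero tuples when exactly one index $k_{j^\star}$ equals $\pm 1$ and the rest vanish (since $\cos(2\pi/n_j)$ is the value of $\cos(2\pi k/n_j)$ closest to $1$ over $k\neq 0$, and setting additional indices to be nonzero only decreases the sum). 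Among the $d$ choices of $j^\star$, the largest value of $\cos(2\pi/n_{j^\star})$ comes from the largest side length, which by assumption is $n_d$. Hence
\[
   \lambda_2 \;=\; \frac{1}{2} + \frac{1}{2d}\Bigl((d-1) + \cos(2\pi/n_d)\Bigr),
\]
so that
\[
   1-\lambda_2 \;=\; \frac{1-\cos(2\pi/n_d)}{2d}.
\]

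Finally I apply the standard Taylor estimate $1-\cos(x) = \Theta(x^2)$ as $x\to 0$ (valid here because $n_d\ge n_1\to\infty$, and anyway we have $0<2\pi/n_d\le 2\pi$ with the ratio bounded away from $1$). Combined with $d=\Oh(1)$, this gives $1-\lambda_2 = \Theta(n_d^{-2})$, i.e.\ $(1-\lambda_2)^{-1}=\Theta(n_d^2)$, as claimed.

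There is no real obstacle; the only subtlety is verifying that the optimizing tuple indeed places a single $\pm 1$ in the coordinate corresponding to the largest side length, which follows from the monotonicity of $\cos$ on $[0,\pi]$ together with the ordering $n_1 \leq \cdots \leq n_d$. The $d=\Oh(1)$ hypothesis is needed only to absorb the factor $1/(2d)$ into the $\Theta$-constant.
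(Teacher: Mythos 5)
Your proof is correct, and it reaches the same key identity as the paper, namely $1-\lambda_2=\frac{1-\cos(2\pi/n_d)}{2d}$, but by a somewhat different route. The paper does not diagonalize anything itself: it writes $T=C_{n_1}\times\cdots\times C_{n_d}$, imports from Chung the spectrum of the cycle Laplacian ($1-\cos(2\pi r/q)$) and the Cartesian-product theorem stating that the second smallest Laplacian eigenvalue of the product is $\frac{1}{d}\min_k\tau_k$, and then converts via $\bP=\mathbf{I}-\frac12\bL$. You instead compute the full spectrum of $\bP$ explicitly through the Fourier eigenvectors, $\lambda_{k_1,\ldots,k_d}=\frac12+\frac{1}{2d}\sum_j\cos(2\pi k_j/n_j)$, and identify the second largest eigenvalue by optimizing over nonzero frequency tuples. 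Your version is self-contained and has the added virtue of explicitly disposing of the ``in absolute value'' issue: the lazy structure $\bP=\frac12 I+\frac12 M$ forces all eigenvalues into $[0,1]$, a point the paper leaves implicit (it follows there from $\bL$ having spectrum in $[0,2]$, but is never stated). The paper's version is shorter and generalizes to products of arbitrary regular graphs at the cost of relying on the cited theorem; yours costs a short optimization argument (single $\pm1$ in the coordinate of the largest side length), which you justify correctly.

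Two small inaccuracies, neither fatal: the justification ``$n_d\ge n_1\to\infty$'' is not right, since $n_1$ may stay constant (e.g.\ $T(2,2,\ldots,n_d)$); what you actually need is $n_d\ge\oldsqrt[d]{n}\to\infty$ for $d=\Oh(1)$, which is exactly how the paper argues. Likewise the remark about $2\pi/n_d\le 2\pi$ ``with the ratio bounded away from $1$'' is muddled; the clean statement is that $1-\cos x=\Theta(x^2)$ uniformly on $(0,\pi]$, and $2\pi/n_d\le\pi$ since $n_d\ge 2$.
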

\begin{proof}
Following the notation of \cite{Ch92},
for a $k$\nobreakdash-regular graph $G$, let $\bL(G)$ be the matrix given by $\bL_{u,u}(G) = 1$,
$\bL_{u,v}(G)= -\frac{1}{k}$ if $\{u,v\} \in E(G)$ and $\bL_{u,v}(G) = 0$ otherwise.
Let $C_{q}$ be a cycle with $q$ vertices.
As shown in \cite[Example~1.4]{Ch92}, the eigenvalues of $\bL(C_{q})$
are $1 - \cos\big( \frac{2 \pi r}{q}\big)$ where $0 \leq r \leq q-1$.
In particular, the second smallest eigenvalue of $\bL(C_q)$
denoted by $\tau$ is given by $1 - \cos\big( \frac{2 \pi}{q} \big)$.

Let $\times$ denote the Cartesian product of graphs, that is, for any two graphs
$G_1=(V_1,E_1)$, $G_2=(V_2,E_2)$ the graph $G:=G_1 \times G_2$ with $G=(V,E)$
is defined by
$V=V_1 \times V_2$ and
\begin{align*}
  E := &\big\{ \big((u_1,u_2), (v_1,u_2)\big) \colon \, u_2\in V_2 \wedge \{u_1,v_1 \} \in E_1 \big\} \,\cup \\
       &\big\{ \big((u_1,u_2), (u_1,v_2)\big) \colon \, u_1\in V_1 \wedge  \{u_2,v_2 \} \in E_2 \big \}.
\end{align*}
It is straightforward to generalize this definition to the Cartesian product of
more than two graphs and it is then easy to check that $T(n_1,n_2,\ldots,n_d) =
C_{n_1} \times C_{n_2} \times \ldots \times C_{n_d}$. The following theorem
expresses the second smallest eigenvalue of the Cartesian product of graphs in
terms of the second smallest eigenvalue of the respective graphs.

\begin{thm}[{\cite[Theorem~2.12]{Ch92}}]\label{thm:cartesian}
    Let $G_1, G_2, \ldots, G_d$ be $d$ graphs and let $\tau_1, \tau_2, \ldots,
    \tau_d$ be the respective second smallest eigenvalue of $\bL(G_1), \bL(G_2), \ldots, \bL(G_d)$.
    Then the second smallest eigenvalue $\tau$ of $\bL(G_1 \times G_2 \times \ldots \times G_d)$
    satisfies
    $\tau = \tfrac{1}{d} \, \min_{k=1}^d \tau_k$.
\end{thm}

\medskip

Applying this theorem to our setting, it follows that the second
smallest eigenvalue~$\tau$ of $\bL(T)$ is
$\tau = \tfrac{1}{d} \, \big(  1 - \cos \bigl( \tfrac{2 \pi}{n_d} \bigr) \big)$.
As $n_d \geq \oldsqrt[d]{n}$, we have $ \cos \big(\frac{2 \pi}{n_d}\big)
= 1 - \Theta \big( \frac{1}{n_d^2} \big) $. Using this and the fact that $d$ is a constant, we obtain 
$
  \tau =  \Theta \big( \tfrac{1}{n_d^2} \big).
$
As $T$ is a $k$\nobreakdash-regular graph, the transition matrix~$\bP(T)$ can be expressed as
$\bP(T) = \mathbf{I} - \frac{1}{2} \bL(T)$.  This implies for the second
smallest eigenvalue of~$\bL(T)$, $\tau$, and the second largest eigenvalue of
the transition matrix~$\bP(T)$, $\lambda_2$, that
$
  \lambda_2 = 1 - \frac{1}{2} \tau.
$
Hence
$\lambda_{2} =  1 - \Theta \big( \tfrac{1}{n_d^2} \big)$,
which completes the proof.
\end{proof}

Note that the corresponding results of \citep{RSW98,FS09} only
hold for uniform torus graphs while the following result
for our algorithm holds for general torus graphs.

\begin{thm}\label{thm:torus}
    For all initial load vectors
    on the (not necessarily uniform)
    $d$\nobreakdash-dimensional torus graph with $n$~vertices,
    the deviation between the idealized process and
    a discrete process with accumulated rounding error at most~$\Lambda$
    is $\Oh(\Lambda)$ at all times.
\end{thm}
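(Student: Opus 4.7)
The plan is to follow the same blueprint as for the hypercube (\thmref{cube}), but with a decisive change in how the difference $\bP_{0,i}^s-\bP_{0,j}^s$ is handled. By translation-invariance of the torus we may fix the distinguished vertex at $\ell=0$, and \eqref{eq:StandardAnsatz} gives
\[
 |x_0^{(t)}-\xi_0^{(t)}| \;\le\; \sum_{[i:j]\in E}\biggl|\sum_{s=0}^{t-1}e_{i,j}^{(t-s)}(\bP_{0,i}^s-\bP_{0,j}^s)\biggr|.
\]
If one simply applies the triangle inequality $|\bP_{0,i}^s-\bP_{0,j}^s|\le \bP_{0,i}^s+\bP_{0,j}^s$ and uses \lemref{betragkleinerk} via unimodality of $s\mapsto\bP_{0,v}^s$ for each piece, as in the hypercube proof, one obtains only an $\Oh(\Lambda\log n)$ bound, since on a $d$-dimensional torus $\sum_v\max_s\bP_{0,v}^s$ is a Green-function-like quantity of order $\log n$. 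To get the claimed $\Oh(\Lambda)$ bound, we must instead treat the spatial difference as a single function of time.

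The key step is to show, for each edge $[i:j]$, that $s\mapsto\bP_{0,i}^s-\bP_{0,j}^s$ has only $\Oh(1)$ local extrema. This is where the local central limit theorem enters: for $s$ up to the mixing time $\Theta(n_d^{2})$ (cf.~\lemref{nonuniformtorus}) we approximate
\[
 \bP_{0,v}^s \;\approx\; C_d\,s^{-d/2}\exp\bigl(-c_d\|v\|^2/s\bigr),
\]
so the difference is approximately $s^{-d/2}\bigl(e^{-c_d\|i\|^2/s}-e^{-c_d\|j\|^2/s}\bigr)$; a direct derivative computation shows that this Gaussian-difference has a unique interior maximum near $s\sim\|v\|^2$. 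For $s$ beyond the mixing time both $\bP_{0,i}^s$ and $\bP_{0,j}^s$ are exponentially close to $1/n$, so their difference is negligible there. Patching these two regimes—absorbing the CLT error into a constant number of additional extrema—establishes the $\Oh(1)$-extrema property flagged in the introduction. Combined with the BED hypothesis $|\sum_{s'=1}^{t'}e_{i,j}^{(s')}|\le\Lambda$, \lemref{betragkleinerk} then yields, edge-by-edge,
\[
 \biggl|\sum_{s=0}^{t-1}e_{i,j}^{(t-s)}(\bP_{0,i}^s-\bP_{0,j}^s)\biggr| \;=\; \Oh(\Lambda)\cdot\max_{s\ge0}\bigl|\bP_{0,i}^s-\bP_{0,j}^s\bigr|.
\]

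Summing over edges is then a short calculus computation: differentiating the Gaussian approximation in the spatial variable shows that for an edge at distance $\sim r$ from the origin, $\max_s|\bP_{0,i}^s-\bP_{0,j}^s|=\Oh(r^{-d-1})$ (the maximum attained near $s\sim r^2$). Since there are $\Oh(r^{d-1})$ edges at distance $\sim r$ on the torus,
\[
 \sum_{[i:j]\in E}\max_s\bigl|\bP_{0,i}^s-\bP_{0,j}^s\bigr|\;=\;\Oh\!\biggl(\sum_{r\ge1}r^{d-1}\cdot r^{-d-1}\biggr)\;=\;\Oh(1),
\]
from which the theorem follows.

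The main obstacle will be making the local CLT control in the key step uniform in~$s$, so that the extrema count for the actual discrete walk (rather than for its Gaussian caricature) can be verified rigorously. The natural vehicle is Fourier inversion on the (possibly non-uniform) torus, whose explicit eigenvalues $1-\tfrac1d\sum_k(1-\cos(2\pi k_j/n_j))$ simultaneously yield the Gaussian behavior for small $s$ and the exponential approach to stationarity for large~$s$. A further subtlety in the non-uniform case is that, once $s$ exceeds $\sim n_1^2$, the walk has already wrapped around the shortest dimension and the effective Gaussian becomes anisotropic, so the extrema count must be verified across several successive regimes before being patched together.
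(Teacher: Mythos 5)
Your skeleton is the same as the paper's (Gaussian approximation of the transition probabilities, a constant number of local extrema in $s$, \lemref{betragkleinerk} edge by edge, then summing $\|k\|^{-(d+1)}$ decay against the $\Theta(r^{d-1})$ edges at distance $r$), but the two points you defer as ``the main obstacle'' are precisely where your version is not a proof, and the paper's resolution is not the regime-patching you envisage. The paper never analyzes the finite-torus difference $\bP_{0,i}^s-\bP_{0,j}^s$ as a single function of $s$: it lifts the walk to the infinite grid, writing $\bP_{i}^s=\sum_{k\in H(i)}\barP_{k}^s$ with $H(i)=(i_1+n_1\Z,\ldots,i_d+n_d\Z)$ and extending the rounding errors periodically, $e_{k,k+z}:=e_{i,j}$ for $k\in H(i)$, so that \eq{StandardAnsatz} becomes a sum over edges of $\Z^{d}$ (\eq{torus:inf}). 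On $\Z^{d}$ there is no wrap-around, no mixing-time tail and no anisotropic intermediate regime even for non-uniform side lengths, and \lemref{betragkleinerk} is applied to each lattice copy separately, using that the Gaussian difference has $\Oh(1)$ extrema and maximum $\Oh(\|k\|_2^{-(d+1)})$ uniformly in $s$ (\eq{multivariate:const}, \eq{multivariate:bound}). In your route you must instead bound the number of extrema of the \emph{sum} of Gaussian-type differences over all lattice copies (that is what the finite-torus probability is), and a sum of functions each with $\Oh(1)$ extrema need not have $\Oh(1)$ extrema; this is exactly the step you leave to ``patching across several successive regimes,'' i.e.\ the hard part is assumed rather than proved.

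The second gap is the small-$s$ regime. The local CLT \eq{LCL1} carries an additive error $\Oh(s^{-(d+3)/2})$, which for $s$ much smaller than $\|k\|_2^2$ dwarfs the Gaussian maximum $\Oh(\|k\|_2^{-(d+1)})$, so ``absorbing the CLT error into a constant number of additional extrema'' cannot work: \lemref{betragkleinerk} charges both the extrema count and the maximum of whatever function you feed it, and a pointwise-small perturbation can create arbitrarily many extrema anyway. The paper handles this by cutting each edge's time-sum at $T(k)=C\|k\|_2^2/\ln^2(\|k\|_2)$: below the cutoff it abandons the CLT and bounds the sum term by term via the exponential tail estimate $\barP_{k}^s=\Oh(\exp(-\|k\|_2/\sqrt{s}))$ (Lawler), which gives $\Oh(\|k\|_2^{-(d+4)})$ per term; above the cutoff it splits \emph{additively} into the Gaussian part (handled by \lemref{betragkleinerk}) and the CLT-error part, which is summable, $\sum_{s\geq T(k)}\Oh(s^{-(d+3)/2})=\Oh(T(k)^{-(d+1)/2})$, and then summable over $k$. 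Your final edge-count computation coincides with the paper's, so the strategy is sound; but the two devices that make it rigorous --- the lift to $\Z^{d}$ with periodically extended errors, and the per-edge cutoff with an additive treatment of the CLT error --- are missing from your argument and are not supplied by the patching you describe.
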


For any torus graph, we know that $(1-\lambda_2)^{-1} = \Theta( n_d^{2})$ by \lemref{nonuniformtorus}.
With \thmref{ideal} it follows that any BED algorithm (and in particular
our quasirandom algorithm) reduces the discrepancy
of any initial load vector with discrepancy $K$ to~$\Oh(1)$ within
$\Oh( n_d^{2}\,\log (K n) )$ time steps (for uniform torus graphs, this number of time steps is $\Oh(n^{2/d} \, \log (K n ) )$).

\begin{proof}[Proof of \thmref{torus}]
    By symmetry of the torus graph, we have $\bP_{i,j} = \bP_{0,i-j}$.
    Hence we set $\bP_{i} = \bP_{0,i}$.
    We will first reduce the random walk $\bP_{i,j}$
    on the finite $d$\nobreakdash-dimensional torus
    to a random walk on the infinite grid~$\Z^{d}$,
    both with loop probability~$1/2$.
    Let
    $\barP_{i,j}$ be the transition probability from~$i$ to~$j$ on~$\Z^{d}$
    defined by $\barP_{i,j} = 1/(4d)$ if $|i-j|_1=1$,
    $\barP_{i,i} = 1/2$,
    and~$0$ otherwise.
    For $i=(i_1,\ldots,i_d)\in V$
    we set
    \[
     H(i):=(i_1+n_1\,\Z,
            i_2+n_2\,\Z,
            \ldots,
            i_d+n_d\,\Z
            )\subset\Z^{d}.
    \]
    With $\barP_{i}:=\barP_{0,i}$,
    we observe that
    \[
        \bP_{i}^s = \sum_{k\in H(i)} \barP_{k}^s
    \]
    for all $s\geq0$ and $i\in V$.
    We extend the definition of $e_{i,j}$ in the natural way
    by setting
    \[
        e_{k,\ell}:=e_{i,j} \text{ for all $i,j\in V$ and $k\in H(i)$, $\ell\in H(j)$.}
    \]

    Let $\ARR=\{ \pm u_\ell \mid \ell\in\{1,\ldots,d\} \}\in\Z^{d}$ with
    $u_\ell$ being the $\ell$\nobreakdash-th unit vector.
    Following \eq{StandardAnsatz} and using the fact that
    by symmetry it suffices to bound the deviation at the vertex $0 := 0^d$,
    we get
    \begin{align*}
        x_0^{(t)} - \xi_0^{(t)}
        &=
        \dfrac{1}{2}
        \sum_{s=0}^{t-1}
        \sum_{i\in V}
        \sum_{z\in\ARR}
        e_{i,i+z}^{(t-s)}
        (\bP_{i}^s - \bP_{i+z}^s)
        \\
        &=
        \dfrac{1}{2}
        \sum_{s=0}^{t-1}
        \sum_{i\in V}
        \sum_{z\in\ARR}
        e_{i,i+z}^{(t-s)}\,
        \Bigg(\sum_{k\in H(i)} \barP_{k}^s - \sum_{\ell\in H(i+z)} \barP_{\ell}^s\Bigg)
        \\
        &=
        \dfrac{1}{2}
        \sum_{s=0}^{t-1}
        \sum_{z\in\ARR}
        \sum_{i\in V}
        e_{i,i+z}^{(t-s)}\,
        \Bigg(\sum_{k \in H(i)} \barP_{k}^s - \barP_{k+z}^s\Bigg)
        \\
        &=
        \dfrac{1}{2}
        \sum_{i\in V}
        \sum_{z\in\ARR}
        \sum_{k\in H(i)}
        \sum_{s=0}^{t-1}
        e_{k,k+z}^{(t-s)}\,
        \big( \barP_{k}^s - \barP_{k+z}^s \big)
    \end{align*}

    \noindent
    As $\Z^{d}=\bigcup_{i\in V} H(i)$ is a disjoint union, we can also write
    \begin{align}
        x_0^{(t)} - \xi_0^{(t)}
        &=
        \frac{1}{2}
        \sum_{k\in \Z^{d}}
        \sum_{z\in\ARR}
        \sum_{s=0}^{t-1}
        e_{k,k+z}^{(t-s)}\,
        \big(\barP_{k}^s - \barP_{k+z}^s\big).
        \label{eq:torus:inf}
    \end{align}

    \noindent
    We now carefully break down the sums of \eq{torus:inf} and show that each part
    can be bounded by~$\O(\Lambda)$.
    For this argument, our main tool will be \lemref{betragkleinerk}.
    As we cannot prove unimodality of $\big(\barP_{k}^s - \barP_{k+z}^s\big)$ directly,
    we will use appropriate local central limit theorems to
    approximate the transition probabilities $\barP_k^s$ of~$\Z^{d}$
    with a multivariate normal distribution.
    To derive the limiting distribution~$\tP_{k}^s$ of our random walk~$\barP_{i,j}$,
    we follow \citet{LawlerLimic} and let $X=(X_1, \ldots, X_d)$ be a $\Z^d$-valued random variable
    with $\Pr{X=z}=1/(4d)$ for all $z\in\ARR$ and
    $\Pr{X=0^d}=1/2$.
    Observe that $\Ex{X_j X_k}=0$ for $j\neq k$ since both of them cannot be non-zero
    simultaneously.
    Moreover, $\Ex{X_j X_j} = \frac{1}{4d} (-1)^2 + \frac{1}{4d} (+1)^2 = \frac{1}{2d}$
    for all
    $1\leq j\leq d$.
    Hence the covariance matrix is
    \[
        \Gamma
        := \Big[ \Ex{X_j X_k}  \Big]_{1\leq j,k\leq d}
        =(2d)^{-1} I.
    \]

    \noindent
    From Eq.~(2.2) of \citet{LawlerLimic} we get
       \begin{align}
                        \tP_{k}^s &=
            \frac{1}{(2\pi)^d\,s^{d/2}} \,
            \int_{\R^d}
            \exp \left( \im\, \frac{x \cdot k}{\sqrt{s}} \,\right) \,
            \exp \left( -\frac{x \cdot \Gamma x}{2} \right) \,
            d^d x\notag\\
        \intertext{where $\im=\sqrt{-1}$ denotes the imaginary unit.  From here, we can further conclude that}
        \tP_{k}^s
            &=
             \frac{1}{(2\pi)^d\,s^{d/2}} \,
            \int_{\R^d}
            \exp \left( \im\, \frac{x \cdot k}{\sqrt{s}}  \,
                           -\frac{x \cdot \Gamma x}{2} \right) \,
            d^d x\notag\\
            &=
             \frac{1}{(2\pi)^d\,s^{d/2}} \,
            \int_{\R^d}
            \exp \left( \im\, \frac{x \cdot k}{\sqrt{s}}  \,
                           -\frac{\|x\|_2^2}{4d} \right) \,
            d^d x\notag\\
             &=
             \frac{1}{(2\pi)^d\,s^{d/2}} \,
            \int_{\R^d}
            \exp \left( -\frac{1}{4d} \left(
                           \|x\|_2^2\,
                           - 2 \im\, \frac{2d}{\sqrt{s}} \, x \cdot k
                          \right)  \right) \,
            d^d x .\label{eq:integral1}
     \end{align}
    To evaluate the integral we complete the square, which yields
    \begin{align}
    \lefteqn{\int_{\R^d}
            \exp \left( -\frac{1}{4d} \left(
                           \|x\|_2^2\,
                           - 2 \im\, \frac{2d}{\sqrt{s}} \, x \cdot k
                          \right)  \right) \,
            d^d x}\notag \\
     &=  \int_{\R^d}\exp \left( -\frac{1}{4d} \left(
                           \|x\|_2^2\,
                           - 2 \im\, \frac{2d}{\sqrt{s}} \, x \cdot k
                           -\frac{4d^2}{s} \|k\|_2^2
                           +\frac{4d^2}{s} \|k\|_2^2
                          \right)  \right) \,
            d^d x\notag \\
     &=  \exp \left(-\frac{d}{s}\|k\|_2^2\right)
             \int_{\R^d}\exp \left( -\frac{1}{4d}
                           \left\|x-\im\,\frac{2d}{\sqrt{s}}k\right\|_2^2
                      \right) \,
            d^d x . \label{eq:integral2}
    \end{align}
    By substituting $z=x-\im\,\frac{2d}{\sqrt{s}}k$ we get
    \begin{align}
     \lefteqn{\int_{\R^d}\exp \left( -\frac{1}{4d}
                             \left\|x-\im\,\frac{2d}{\sqrt{s}}k\right\|_2^2
                            \right) \,
            d^d x
     }\notag\\
     &=
     \int_{\R^{d}}\exp \left( -\frac{1}{4d} \left(
                           \|z\|_2^2
                          \right)  \right) \,
            d^d z
            \notag\\
     &=
     \idotsint_{\R^d}\exp \left( -\frac{1}{4d} \left(
                          \sum_{i=1}^d z_i^2
                          \right)  \right) \,
            d z_d \, \ldots \, d z_1
            \notag\\
     &=
     \idotsint_{\R^{d-1}} \exp \left( -\frac{1}{4d} \left(
                          \sum_{i=1}^{d-1} z_i^2
                          \right)  \right)
                          \,
                          \left(\int_{\R}  \exp \left( -\frac{1}{4d} z_d^2\right)
            d z_d \right) d z_{d-1}\, \ldots \, d z_1
            \notag\\
     &=
     \left(2 \sqrt{\pi d}\right) \cdot
     \idotsint_{\R^{d-1}}\exp \left( -\frac{1}{4d} \left(
                          \sum_{i=1}^{d-1} z_i^2
                          \right)  \right) \,
            d z_{d-1} \, \ldots \, d z_1
            \notag\\
     &=
        \left(2 \sqrt{\pi d}\right)^{d} .
        \label{eq:integral3}
    \end{align}
    Combining \eqss{integral1}{integral2}{integral3}, we get
    \begin{align}
     \tP_{k}^s
     &=  \frac{1}{(2\pi)^d\,s^{d/2}} \,
           \exp \left(-\frac{d}{s}\|k\|_2^2\right) \,
           \left(2 \sqrt{\pi d}\right)^{d}\notag\\
     &= \left( \frac{d}{\pi s} \right) ^{d/2}
            \exp\left(\frac{-d \,\|k\|_2^2}{s}\right).
        \label{eq:multivariate}
    \end{align}
    It follows directly from Claims~4 and~5 of \citet{CooperSpencer} that
    for all $k\in\Z^{d}$, $z\in\ARR$,
    \begin{align}
        &\tP_{k}^s - \tP_{k+z}^s = \Oh( \|k\|_2^{-(d+1)}) \text{ for all $s$},
        \label{eq:multivariate:bound}
        \\
        (s\mapsto&\tP_{k}^s - \tP_{k+z}^s) \text{ has only a constant number of local extrema.}
        \label{eq:multivariate:const}
    \end{align}

    \noindent
    This gives the intuition that by approximating
    $\big(\barP_{k}^s - \barP_{k+z}^s\big)$
    with
    $\big(\tP_{k}^s - \tP_{k+z}^s\big)$,
    we can
    bound
    \eq{torus:inf}
    for sufficiently large~$k$ and~$s$
    by \lemref{betragkleinerk}.
    This approximation is made precise by the following
    local central limit theorems.
    Theorem~2.3.6 of \citet{LawlerLimic} gives
    for all $k\in\Z^{d}$, $z\in\ARR$, $s\geq0$,
    \begin{align}
        \label{eq:LCL1}
        \big|
        \big(\barP_{k}^s - \barP_{k+z}^s\big) -
        \big(\tP_{k}^s - \tP_{k+z}^s\big)
        \big|
        &=
        \O(s^{-(d+3)/2}).
    \end{align}

    \noindent
    We now start to break down \eq{torus:inf}. As the first step, we consider the special case $k$ with $\| k \|_2 < 3$, which is relatively straightforward, since it only involves a constant number of vertices. With $\Znd:=\{ k \in \Z^d \colon \| k \|_2 \geq 3 \}$
    and $\Z_{< 3}^d :=\{ k \in \Z^d \colon \| k \|_2 < 3 \}$,
        \begin{align}
        \big|
        x_0^{(t)} - \xi_0^{(t)}
        \big|
        &\leq
        \frac{1}{2}         \sum_{\substack{ k\in \Z_{< 3}^d }} 
        \overbrace{
        \Bigg|
        \sum_{z\in \ARR}
        \sum_{s=0}^{t-1}
        e_{0,0+z}^{(t-s)}\,
        \left(\barP_{k}^s - \barP_{k+z}^s\right)
        \Bigg|}^{(\ref{eq:torus:inf2}a)}
        \notag\\
        &\phantom{\leq}+
        \frac{1}{2}
        \underbrace{
        \Bigg|
        \sum_{k\in \Znd}
        \sum_{z\in \ARR}
        \sum_{s=0}^{t-1}
        e_{k,k+z}^{(t-s)}\,
        \left(\barP_{k}^s - \barP_{k+z}^s\right)
        \Bigg|
        }_{(\ref{eq:torus:inf2}b)}
        \label{eq:torus:inf2}
    \end{align}

    \noindent
    Now we can apply the local central limit theorem given in \eq{LCL1} to (\ref{eq:torus:inf2}a) and get
    \begin{align*}
        (\ref{eq:torus:inf2}a)
        &=
        \Bigg|
        \sum_{z\in \ARR}
        \sum_{s=0}^{t-1}
        e_{k,k+z}^{(t-s)}\,
        \big(\tP_{0}^s - \tP_{0+z}^s\big)
        \Bigg|
        +
        \Bigg|
        \sum_{z\in \ARR}
        \sum_{s=0}^{t-1}
        \O(s^{-(d+3)/2})
        \Bigg|
        =
        \O(\Lambda),
    \end{align*}
    where the last equality follows by
    \lemref{betragkleinerk}
    combined with
    \eq{multivariate:const}
    and
    the property
    $\big|\sum_{s=1}^t e_{i,j}^{(s)} \big|\leq \Lambda$.
    
    In order to analyze the vertices $k \in\Znd$ in \eq{torus:inf}, 
    we proceed by fixing
    a cutoff point $T(k):=\tfrac{C \,\|k\|_2^2}{\ln^2(\|k\|_2)}$,
    $k\in\Znd$, of the innermost sum
    of (\ref{eq:torus:inf2}b)
    for some sufficiently small constant $C>0$. Note that the cutoff point is chosen in such a way that it is very unlikely for a random walk to reach a vertex with distance $k$ within less than $T(k)$ rounds. Hence the contribution from all those summands (\ref{eq:torus:inf3}a) is only a constant. For bounding 
the remaining sum, (\ref{eq:torus:inf3}b), we 
use the local central limit theorem (\ref{eq:LCL1}) whose approximation error is small, since we only need to consider rounds larger than the cutoff $T(k)$.
   
    Proceeding with the formal proof, we have 
    \begin{align}
        (\ref{eq:torus:inf2}b)
        &\leq
        \overbrace{
        \Bigg|
        \sum_{k\in \Znd}
        \sum_{z\in \ARR}
        \sum_{s=0}^{T(k)}
        e_{k,k+z}^{(t-s)}\,
        \left(\barP_{k}^s - \barP_{k+z}^s\right)
        \Bigg|}^{(\ref{eq:torus:inf3}a)}
        \notag\\
        &\phantom{\leq}+
        \underbrace{
        \Bigg|
        \sum_{k\in \Znd}
        \sum_{z\in \ARR}
        \sum_{s=T(k)}^{t-1}
        e_{k,k+z}^{(t-s)}\,
        \left(\barP_{k}^s - \barP_{k+z}^s\right)
        \Bigg|}_{(\ref{eq:torus:inf3}b)}.
        \label{eq:torus:inf3}
    \end{align}
    Note that the summand with~$s=0$ is zero and can be ignored (since $\|k\|_2 \geq 3$).
    Hence the first summand (\ref{eq:torus:inf3}a) can be bounded by
    \begin{align}
        (\ref{eq:torus:inf3}a)
        &=
        \O
        \Bigg(
        \sum_{k\in \Znd}
        \sum_{z\in \ARR}
        \sum_{s=1}^{T(k)}
        \Big(\barP_{k}^s + \barP_{k+z}^s\Big) \label{eq:secondlast}
        \Bigg).
    \end{align}
    It is known from
    \citet[Lem.~1.5.1(a)]{Lawler} that
    for random walks on infinite grids,
    $\sum_{\|k\|_2\geq\lambda\sqrt{s}} \barP_{k}^s = \O(e^{-\lambda})$
    for all $s>0$ and $\lambda>0$. Hence, it is also true that
        \[
    \barP_{k}^s
    = \O\big(\exp\big(-\|k\|_2 / \sqrt{s} \big)\big)
    \text{ for all $s>0$, $k\in\Z^{d}$.}
    \]
                    With that, we can now bound the term $\big(\barP_{k}^s + \barP_{k+z}^s\big)$ from \eq{secondlast}.
    For $0<s\leq T(k)$,
    $k\in\Znd$, $z\in\ARR$,
    and sufficiently small $C > 0$,
           \begin{align*}
    \barP_{k}^s + \barP_{k+z}^s
        &=\Oh\bigg( \exp\bigg(-\frac{\|k\|_2}{\sqrt{s}}\bigg) + \exp\bigg(-\frac{\|k + z\|_2}{\sqrt{s}}\bigg) \bigg) \\
    &=\Oh\bigg( \exp\bigg(- \frac{ \| k \|_2}{\sqrt{3} \cdot \sqrt{s}}\bigg) \bigg) \\
    &=\Oh\bigg(\exp\bigg(- 2 \ln(\|k\|_2)\frac{\|k\|_2}{\sqrt{3 \cdot C}\, \|k\|_2}\bigg)\bigg) \\
    &=\Oh\big(\|k\|_2^{-(d+4)}\big),
    \end{align*}
    where in the second line we have used 
    that
    \[
       \| k \|_2^2 - \| k + z \|_2^2  \leq \| k \|_2^2 - \left( \|k\|_2^2 - 2 \|k\|_2 \cdot \|z\|_2 + \|z\|_2^2  \right) \leq 2 \| k \|_2 ,
     \]
    implying directly that \[
    \|k+z\|_2 \geq \sqrt{ \|k \|_2^2 - 2 \|k \|_2 } = \sqrt{ \|k\|_2 \cdot ( \|k\|_2 - 2  )} \geq  \frac{ \|k\|_2 }{\sqrt{3}}, \] since $\|k\|_2 \geq 3$.

    \noindent
    Plugging this into \eq{secondlast}, we obtain
    \begin{align*}
        (\ref{eq:torus:inf3}a)
        =
        \O\Bigg(
        \sum_{k\in \Znd}
        T(k)\,
        \|k\|_2^{-(d+4)}
        \Bigg)
        =
        \O\Bigg(
        \sum_{k\in \Znd}
        \|k\|_2^{-(d+2)}
        \,\ln^{-2}(\|k\|_2)
        \Bigg)
        =\O(1).
    \end{align*}

    To bound (\ref{eq:torus:inf3}b),
    we approximate the transition probabilities of $\Z^{d}$
    with the multivariate normal distribution of \eq{multivariate}
    by the local central limit theorem stated in \eq{LCL1}:
    \begin{align}
        (\ref{eq:torus:inf3}b) &= \Bigg|
        \sum_{k\in \Znd}
        \sum_{z\in \ARR}
        \sum_{s=T(k)}^{t-1}
        e_{k,k+z}^{(t-s)}\,
        \left(\tP_{k}^s - \tP_{k+z}^s\right)
        \notag \\
        &\phantom{\leq}+
        \sum_{k\in \Znd}
        \sum_{z\in \ARR}
        \sum_{s=T(k)}^{t-1}
        e_{k,k+z}^{(t-s)}\,
        \left(\barP_{k}^s - \barP_{k+z}^s\right) - \left(\tP_{k}^s - \tP_{k+z}^s\right)
        \Bigg|   \notag \\
        &\leq
        \overbrace{
        \Bigg|
        \sum_{k\in \Znd}
        \sum_{z\in \ARR}
        \sum_{s=T(k)}^{t-1}
        e_{k,k+z}^{(t-s)}\,
        \left(\tP_{k}^s - \tP_{k+z}^s\right)
        \Bigg|}^{(\ref{eq:megasum}a)}
        \notag\\
        &\phantom{\leq}+
        \underbrace{
        \Bigg|
        \sum_{k\in \Znd}
        \sum_{z\in \ARR}
        \sum_{s=T(k)}^{t-1}
        e_{k,k+z}^{(t-s)}\,
         \O(s^{-(d+3)/2})
        \Bigg|}_{(\ref{eq:megasum}b)}.
        \label{eq:megasum}
    \end{align}

    \noindent
    We can bound the second term (\ref{eq:megasum}b), signifying the approximation error from the local central limit theorem, by
    \begin{align*}
        (\ref{eq:megasum}b)
        &=
        \O\Bigg(
        d
        \sum_{k\in \Znd}
        \sum_{s=T(k)}^{\infty}
        \,s^{-(d+3)/2}
        \Bigg)
        \\
        &=
        \O\Bigg(
        \sum_{k\in \Znd}
        T(k)^{-(d+1)/2}
        \Bigg) \\
         &=
        \O\Bigg(
        \sum_{k\in \Znd}
        \frac{\ln^{d+1}(\|k\|_2)  }{\|k\|_2^{d+1}}
        \Bigg).
    \intertext{As there are constants $C' > 0$ and $\epsilon > 0$ such that
    $\ln^{d+1}(\|k\|_2) \leq C' \, \|k\|_2^{1-\epsilon}$ for all $k \in \Znd$, we obtain}
        (\ref{eq:megasum}b)
        &=
        \O\Bigg(
        \sum_{k\in \Znd}
        \|k\|_2^{-(d+\epsilon)}
        \Bigg).
    \end{align*}
    To see that this can be bounded by $\O(1)$, observe that
    with $\Nnd:= \{ k \in \N^d \colon \|k\|_2 \geq 3 \}$,
    \begin{align*}
      \sum_{k\in \Znd} \| k \|_2^{-(d+\epsilon)}
      \leq 2^{d} \, \sum_{k\in \Nnd} (k_1^2 + \cdots + k_{d}^2)^{-(d+\epsilon)/2}.
    \end{align*}
    By convexity of $x \mapsto x^2$,
    $k_1^{2} + \cdots + k_{d}^{2}
    \geq \tfrac{1}{d} (k_1 + \cdots + k_{d})^2$, we then get
    \begin{align*}
       (\ref{eq:megasum}b)
       &= \Oh \Bigg( \sum_{k\in \Nnd} (k_1 + \cdots + k_{d})^{-(d+\epsilon)}\Bigg)
       = \Oh \Bigg( \sum_{x=1}^{\infty} \sum_{\substack{k\in \N^{d} \\ \|k\|_1=x}}
        x^{-(d+\epsilon)} \Bigg) \\
      &= \Oh \Bigg( \sum_{x=1}^{\infty} x^{d-1} \cdot x^{-(d+\epsilon)} \Bigg)
       = \Oh \Bigg( \sum_{x=1}^{\infty} x^{-(1+\epsilon)} \Bigg)
       = \Oh(1).
    \end{align*}

    \noindent
    Finally, to bound (\ref{eq:megasum}a),
    we apply
    \eq{multivariate:const}.
    We also use the fact that $\tP_{k}^s - \tP_{k+z}^s$
    can be bounded by $\Oh( \|k\|_2^{-(d+1)})$
    according to \eq{multivariate:bound}.
    As     $|\sum_{s=1}^t e_{i,j}^{(s)} |\leq \Lambda$,
    applying \lemref{betragkleinerk} yields
    \begin{align*}
        (\ref{eq:megasum}a)
        &=
        \O\Bigg(
        \sum_{k\in \Znd}
        \sum_{z\in \ARR}
        \Lambda \,
        \max_{s=T(k)}^{t-1}
        \Big(\tP_{k}^s - \tP_{k+z}^s\Big)
        \Bigg)\\
        &=\O\Bigg(
        \sum_{k\in \Znd}
        \sum_{z\in \ARR}
        \Lambda \,
        \|k\|_2^{-(d+1)}
        \Bigg)
        \\&
        =
        \O\Bigg( \Lambda \,
        d
        \sum_{k\in \Znd}
        \|k\|_2^{-(d+1)}
        \Bigg)
        \\&
        =
        \O ( \Lambda ).
    \end{align*}
    Combining all the above bounds,
    we can conclude that
    $
      \big|x_0^{(t)} - \xi_0^{(t)}\big|
        = \Oh(\Lambda),
    $
    meaning that the deviation between the idealized process and the discrete process at any time and
    vertex is at most $\Oh(\Lambda)$.
\end{proof}

\section{Lower bounds for previous algorithms}
\label{sec:others}

For a better comparison with previous algorithms,
this section gives lower bounds for other discrete diffusion processes.
First, we observe the following general lower bound on the discrepancy for the RSW algorithm.

\begin{pro}
\label{pro:lowerbound}
    On all
    graphs~$G$ with maximum degree~$\Delta$,
    there is an initial load-vector $x^{(0)}$ with discrepancy $\Delta \diam(G)$
    such that
    for the RSW algorithm,
    $x^{(t)}=x^{(t-1)}$ for all $t \in \N$.
\end{pro}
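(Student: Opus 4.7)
The plan is to exhibit an explicit load vector that is a fixed point of the RSW-algorithm and whose discrepancy equals $\Delta \diam(G)$.

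First, I would fix a pair of vertices $u,v^* \in V$ realizing the diameter, i.e., $\dist(u,v^*) = \diam(G)$, and define the initial load by setting $x_v^{(0)} := \Delta \cdot \dist(u,v)$ for every $v \in V$. The discrepancy of this vector is precisely $\max_v x_v^{(0)} - \min_v x_v^{(0)} = \Delta \cdot \dist(u,v^*) - 0 = \Delta \diam(G)$, as required.

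Next, I would verify that no flow moves under the RSW-algorithm. For any edge $\{i,j\} \in E$, the triangle inequality gives $|\dist(u,i) - \dist(u,j)| \leq 1$, hence $|x_i^{(0)} - x_j^{(0)}| \leq \Delta$. Assume without loss of generality that $x_i^{(0)} \geq x_j^{(0)}$. Plugging into the definition of the diffusion matrix, the fractional flow along this edge is
\[
\bP_{i,j}\, x_i^{(0)} - \bP_{j,i}\, x_j^{(0)} \;=\; \frac{x_i^{(0)} - x_j^{(0)}}{2\Delta} \;\leq\; \frac{\Delta}{2\Delta} \;=\; \frac{1}{2},
\]
which is non-negative and strictly less than $1$. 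Since the RSW-algorithm rounds this quantity down, the integer flow $\Phi_{i,j}^{(1)} = 0$ on every edge, so $x^{(1)} = x^{(0)}$. By induction on $t$, since the load vector is unchanged, the same argument at every subsequent step yields $x^{(t)} = x^{(t-1)}$ for all $t \in \N$.

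There is essentially no obstacle here; the proof is a one-line construction exploiting the fact that the diffusion matrix normalizes edge weights by $2\Delta$, so any load profile whose gradient along each edge is at most $\Delta$ produces fractional flows of absolute value at most $1/2$ which vanish under integer rounding-down. Taking the gradient $\Delta$ on every edge of a shortest path between diametrically opposite vertices then accumulates a global discrepancy of $\Delta \diam(G)$.
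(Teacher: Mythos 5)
Your proposal is correct and follows essentially the same construction as the paper: take $x^{(0)}_v = \Delta\cdot\dist(u,v)$ for a vertex $u$ realizing the diameter, note that the load difference across any edge is at most $\Delta$ so the fractional flow has absolute value at most $1/2$, and conclude that the RSW rounding-down yields zero flow on every edge, hence a fixed point.
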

\begin{proof}
    Fix a pair of vertices $i$ and $j$ with $\dist(i,j)=\diam(G)$. Define an initial load-vector $x^{(0)}$ by
    \[
      x^{(0)}_k := \dist(k,i) \cdot \Delta.
    \]
    Clearly, the initial discrepancy is $x_j^{(0)}-x_i^{(0)} = \Delta \diam(G)$.
    We claim that $x^{(1)}=x^{(0)}$. Consider an arbitrary edge $\{r,s\} \in E(G)$. Then,
    \begin{align*}
       \big| \bP_{r,s}\,x_r^{(1)} - \bP_{s,r} \,x_s^{(1)}  \big|
       = \frac{1}{2 \Delta} \big| x_r^{(0)}  - x_s^{(0)} \big|
       \leq \frac{1}{2 \Delta}  \, \Delta
       = \frac{1}{2}.
    \end{align*}
    Hence the integral flow on any edge $\{r,s\} \in E(G)$ is $\lfloor \frac{1}{2} \rfloor = 0$ and the load-vector remains unchanged. The claim follows.
\end{proof}

In the remainder of this section we present two lower bounds for
the deviation between the randomized rounding diffusion
algorithm and the idealized process.
First, we prove a bound of $\Omega(\log n)$
for the hypercube.  Together with \thmref{cube}
this implies that on hypercubes
the quasirandom approach is as good as the
randomized rounding diffusion algorithm.

\begin{thm}
\label{thm:cuberandomlower}
    There is an initial load vector
    of the $d$\nobreakdash-dimensional hypercube with $n=2^d$ vertices
    such that the deviation of the randomized rounding diffusion
    algorithm and the idealized process
    is at least $\log n/4$ with probability $1-n^{\Omega(1)}$.
\end{thm}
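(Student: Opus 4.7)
The plan is to exhibit an initial load vector for which, already at time $t = 1$, the maximum deviation over vertices exceeds $\log n / 4$ with high probability. Set $x_v^{(0)} := d \cdot |v|_1$ for every $v \in \{0,1\}^d$, so the load at $v$ equals $d$ times its Hamming weight. Then for every edge $\{v, v \oplus e_k\} \in E$ the fractional flow at step $1$ is $\bP_{v, v \oplus e_k} x_v^{(0)} - \bP_{v \oplus e_k, v} x_{v \oplus e_k}^{(0)} = \tfrac{1}{2}(|v|_1 - |v \oplus e_k|_1) = \pm \tfrac{1}{2}$, and so the randomized rounding algorithm places on every edge an independent, uniform random error $e_{i,j}^{(1)} \in \{-\tfrac{1}{2}, +\tfrac{1}{2}\}$.

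Plugging this into \eq{StandardAnsatz} at $t = 1$ and using the antisymmetry $e_{i,j}^{(1)} = -e_{j,i}^{(1)}$, I obtain $x_\ell^{(1)} - \xi_\ell^{(1)} = \sum_{j : \{\ell, j\} \in E} e_{\ell, j}^{(1)}$, which is distributed as $B - d/2$ for $B \sim \Bin(d, 1/2)$. Next I would restrict attention to the even-weight subset $T := \{v \in V : |v|_1 \text{ is even}\}$; this has size $n/2$ and is an independent set in the hypercube. Because the $d$ edges incident to any $\ell \in T$ are disjoint from those of every other $\ell' \in T$, the deviations $\{x_\ell^{(1)} - \xi_\ell^{(1)}\}_{\ell \in T}$ are mutually independent.

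By Stirling's formula applied to $\binom{d}{\lceil 3d/4 \rceil}$, there is a constant $c \in (0, 1)$ with $\Pro{B \geq 3d/4} \geq n^{-c}/\sqrt{\log n}$, so $\Pro{x_\ell^{(1)} - \xi_\ell^{(1)} \geq d/4} \geq n^{-c}/\sqrt{\log n}$ for every $\ell \in T$. Combining this with independence across the $n/2$ vertices of $T$ yields
\[
\Pro{\max_{\ell \in T} \left(x_\ell^{(1)} - \xi_\ell^{(1)}\right) < d/4} \;\leq\; \left(1 - \frac{n^{-c}}{\sqrt{\log n}}\right)^{n/2} \;=\; n^{-\omega(1)},
\]
which, together with $d = \log n$, proves the claim. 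The main obstacle is verifying the binomial anti-concentration lower bound at $3d/4$; the rest reduces to applying \eq{StandardAnsatz} at $t = 1$ and exploiting that the even-weight vertices form an independent set, so that rounding errors in their disjoint edge-stars yield genuinely independent deviations.
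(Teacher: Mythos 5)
Your proposal is correct and takes essentially the same route as the paper: after a single step the deviation at a vertex is a centered $\Bin(d,1/2)$ arising from $d$ independent $\pm\tfrac{1}{2}$ rounding errors, anti-concentration at $3d/4$ gives success probability $n^{-c}$ with $c<1$ at each vertex, and independence across an independent set of vertices (you use the full even-parity class of size $n/2$; the paper uses a distance-$4$ packing of size $\Omega(n/\log^4 n)$) amplifies this to probability $1-n^{-\Omega(1)}$. Your choice $x_v^{(0)}=d\,|v|_1$, which makes every edge carry fractional flow exactly $\pm\tfrac{1}{2}$, together with the one-step identity $x_\ell^{(1)}-\xi_\ell^{(1)}=\sum_{j:\{\ell,j\}\in E}e_{\ell,j}^{(1)}$, is a harmless (arguably cleaner) variant of the paper's two-valued load vector.
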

\begin{proof}
    We define an initial load vector $x^{(0)}$ as follows. For every vertex
    $v=(v_1,v_2,\ldots,v_{d}) \in \{0,1\}^{d}$ with $v_1=0$ we set
     $x_{v}^{(0)}=\xi_{v}^{(0)}=0$ and if $v_1=1$
     we set $x_{v}^{(0)}=\xi_{v}^{(0)}=d$.
     Hence, the idealized process
    will send a flow of $d/(2d)=1/2$ from
    every vertex $v=(1,v_2,v_3,\ldots,v_d) \in \{0,1\}^d$ to $(0,v_2,v_3,\ldots,v_d)$. Hence for the idealized process, $\xi_v^{(1)}=(1/2)\,d$, that is, all
    vertices have a load of $(1/2)\,d$ after one step
    and the load is perfectly balanced.

    Let us now consider the discrete
    process. Let $V_0$ be the set of vertices whose bitstring begins with $0$.
    Consider any node $v \in V_0$. Note that all neighbors of $v$ have a load of $1$ and the integral flow from any of those neighbors equals $1$ with independent probability $1/2$. Hence the load of $v$ in the next step is just a binomial random variable, and
     using the fact that $\binom{r}{s} \geq (r/s)^s$, we obtain
    \begin{align*}
      \Pro { x_{v}^{(1)} = \frac{3}{4} \, d } \geq \Pro { x_{v}^{(1)} \geq \frac{3}{4} \, d } &\geq \binom{d}{(3/4) d} 2^{-d} \geq \bigg( \frac{4}{3} \bigg)^{(3/4) d} 2^{-d}  \geq n^{-1+C},
     \end{align*}
    for some constant $C > 0$ since $d=\log_2 n$. As the maximum degree of the graph is $\log n$ and the size of $V_0$ is $n/2$, it follows that there is a subset $S
    \subseteq V_0$ of size $\Omega\big(\frac{n}{\log^4 n}\big)$ in the hypercube such that
    every pair in $S$ has distance at least $4$. By construction,
    the respective events $x_{v}^{(1)} \geq (3/4) d$ are independent for all vertices $v \in S$. Hence
    \begin{align*}
      \Pro { \exists v \in S \colon x_{v}^{(1)} \geq \frac{3}{4} \, d } &\geq 1 - \left( 1 - n^{-1+C} \right)^{\Omega\left(\frac{n}{\log^4 n}\right)} \geq 1 - n^{-C'},
    \end{align*}
  where $0 < C' < C$ is another constant. This means that with probability at least $1-n^{-C'}$ the load at vertex $u$ at step $1$ will be at least $(3/4)\,d$ in the discrete process, but equals $(1/2)\,d$ in the idealized process. This completes the proof.
\end{proof}

It remains to give a lower bound for the deviation
between the randomized rounding and the idealized process
for torus graphs.
The following theorem proves a polylogarithmic lower bound
for the randomized rounding algorithm, which
should be compared to the constant upper bound
for the quasirandom approach of \thmref{torus}.
Similar results can also be derived for non-uniform torus graphs.

\begin{thm}
\label{thm:toruslower}
    There is an initial load vector
    of the $d$\nobreakdash-dimensional uniform torus graph with $n$~vertices
    such that the deviation between the randomized rounding diffusion algorithm
    and the idealized process
    is $\Omega( \polylog(n))$ with probability $1-o(1)$.
\end{thm}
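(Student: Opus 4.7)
The plan is to establish that at each vertex $\ell$ the deviation $x_\ell^{(t)}-\xi_\ell^{(t)}$ has variance $\Omega(1)$, and then amplify this constant per-vertex estimate to a polylog maximum by looking at many approximately independent, well-separated vertices, in the spirit of~\thmref{cuberandomlower}. The essential contrast with the hypercube is that torus vertices have only $2d=O(1)$ neighbours, so no polylog per-vertex variance is available and the extra logarithmic factor must come entirely from the maximum-over-vertices argument.

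Fix $t := \lfloor n^{1/d} \rfloor$ (well below the mixing time $\Theta(n^{2/d})$). First I would pick an initial load~$x^{(0)}$ whose coordinates keep the fractional flows generic throughout the process; a convenient candidate is a random integer vector in~$\{0,\ldots,2d-1\}^V$, from which a good deterministic instance is extracted by the probabilistic method. The key property is that $\mathbb{E}[\mathrm{Var}(e_{i,j}^{(s)}\mid\mathcal{F}_{s-1})]=\Omega(1)$ for a constant fraction of triples $(i,j,s)$ with $s\leq t$. Using~\eq{StandardAnsatz} and the fact that the $e_{i,j}^{(s)}$ form a martingale difference sequence, with edges at a fixed time step mutually independent, the variance decomposes as
\[
    \mathrm{Var}(x_\ell^{(t)}-\xi_\ell^{(t)})
    = \sum_{s=1}^{t}\sum_{[i:j]\in E}\mathbb{E}\bigl[\mathrm{Var}(e_{i,j}^{(s)}\mid\mathcal{F}_{s-1})\bigr]\,(\bP_{\ell,i}^{t-s}-\bP_{\ell,j}^{t-s})^2.
\]
A spectral identity for the lazy walk (or the local CLT leading to~\eq{multivariate}) shows $\sum_{r}\sum_{[i:j]}(\bP_{\ell,i}^{r}-\bP_{\ell,j}^{r})^2=\Theta(1)$ — the $r=0$ summand alone already contributes $\deg(\ell)=2d$ — so $\mathrm{Var}(x_\ell^{(t)}-\xi_\ell^{(t)})=\Omega(1)$ for every vertex~$\ell$.

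To convert the constant per-vertex variance into a polylog maximum, I would choose a family $S$ of $m=n^{\Omega(1)}$ vertices at pairwise Euclidean distance at least $K\sqrt{t\log n}$; since $t^{d/2}\le n^{1/2}$, such a packing exists with $m\ge n^{1/2-o(1)}$. The Gaussian tail of~\eq{multivariate} makes the cross products $(\bP_{\ell,i}^{t-s}-\bP_{\ell,j}^{t-s})(\bP_{\ell',i}^{t-s}-\bP_{\ell',j}^{t-s})$ negligibly small uniformly in~$s$, so the deviations at distinct vertices of~$S$ are asymptotically uncorrelated. Since each $x_\ell^{(t)}-\xi_\ell^{(t)}$ is a sum of many conditionally independent, uniformly bounded martingale increments, a high-moment Paley--Zygmund estimate (using the subgaussian moment bound $\mathbb{E}[(x_\ell^{(t)}-\xi_\ell^{(t)})^{2p}]=O(p!\,\sigma^{2p})$) gives $\Pr\bigl[|x_\ell^{(t)}-\xi_\ell^{(t)}|\geq c\sqrt{\log m}\bigr]\geq m^{-1+\delta}$. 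A union bound (or Slepian's inequality to handle the residual correlations) over $S$ then yields $\max_{\ell\in S}|x_\ell^{(t)}-\xi_\ell^{(t)}|=\Omega(\sqrt{\log m})=\Omega(\sqrt{\log n})$ with probability $1-e^{-m^{\delta}}=1-o(1)$.

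The hardest step will be the uniform variance lower bound $\mathbb{E}[\mathrm{Var}(e_{i,j}^{(s)}\mid\mathcal{F}_{s-1})]=\Omega(1)$ over the full horizon $t=n^{\Omega(1/d)}$: the fractional parts of the flows could a priori drift arbitrarily close to the integers under the randomised rounding dynamics and collapse the variance of the next rounding decision. I expect this to be handled by a Fourier-analytic anti-concentration argument on the joint distribution of the fractional parts — or, more simply, absorbed into the probabilistic choice of~$x^{(0)}$ at the outset, paying only a constant factor in the final variance bound.
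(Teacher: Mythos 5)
There is a genuine gap, and it lies in the amplification step, which is the heart of your plan. You want a per-vertex tail bound of the form ``deviation at $\ell$ exceeds $c\sqrt{\log m}$ with probability at least $m^{-1+\delta}$'' and then a union/independence argument over $m=n^{\Omega(1)}$ separated vertices. But the deviation $x_\ell^{(t)}-\xi_\ell^{(t)}=\sum_s\sum_{[i:j]}e_{i,j}^{(t-s)}(\bP_{\ell,i}^{s}-\bP_{\ell,j}^{s})$ is not Gaussian at the scale $\sqrt{\log n}$: its variance is concentrated on the $\Oh(1)$ terms with small lag $s$ and edges near $\ell$, while the contribution of all lags $s\geq s_0$ has total squared weight $\sum_{s\geq s_0}\sum_{[i:j]}(\bP_{\ell,i}^{s}-\bP_{\ell,j}^{s})^2=\Oh(s_0^{-d/2})=o(1)$ (the per-step squared-weight sum decays like $s^{-d/2-1}$). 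Consequently, to push the deviation up to height $h$ one must use lags up to $s\approx h^2$ (the total absolute weight up to lag $S$ is only $\Theta(\sqrt{S})$), which forces on the order of $h^{d+2}$ rounding decisions to cooperate; the probability of this is $2^{-\Omega(h^{d+2})}$, so at $h=\sqrt{\log n}$ it is superpolynomially small in $n$ and no union bound over $\poly(n)$ vertices can rescue it. Your Paley--Zygmund step also points the wrong way: a subgaussian \emph{upper} bound $\Ex{Z^{2p}}=\Oh(p!\,\sigma^{2p})$ can only cap the tail; a lower bound on the tail needs matching \emph{lower} bounds on high moments, and these fail here for exactly the variance-concentration reason above. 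The unresolved anti-concentration issue you flag (conditional variances collapsing as flows become integral) is real but secondary; even granting it, the scheme cannot reach $\Omega(\sqrt{\log n})$, and with only $\Omega(1)$ total variance it does not even give a growing deviation with probability $1-o(1)$ by itself.

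The paper's proof takes your high-level intuition (many well-separated, independent local experiments) but replaces the Gaussian-tail event by a \emph{freezing} event with a much larger probability and a much more modest target. It plants, around each of $|S|=\Omega(n/\polylog(n))$ separated centers, a pyramid-shaped load of height $\Theta(\ell)$ supported on a ball of only $(\log n)^{1/4}$ vertices, with $\ell=(\log n)^{1/(4d)}$. By \corref{ideal} applied to the small local torus, the idealized process balances such a region within $t_0=\Oh(\log\log n\,(\log n)^{1/(2d)})$ steps, while the discrete process stays completely unchanged if every edge in the region rounds down at every step; since all fractional flows there lie in $\{0,\tfrac12\}$, this freezing has probability at least $2^{-(\log n)^{1/4}t_0}\geq 2^{-(\log n)^{4/5}}$, which is far larger than $n^{-1}$. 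With polynomially many independent regions, some region freezes with probability $1-o(1)$, giving a deviation of $\Omega(\ell)=\Omega((\log n)^{1/(4d)})$ at its center --- note the theorem only asserts \emph{some} polylogarithmic deviation, not $\Omega(\sqrt{\log n})$. If you want to salvage your draft, the repair is essentially to swap your moderate-deviation event for such a local freezing (or ``all roundings cooperate'') event and to lower your target exponent accordingly; at that point the argument coincides with the paper's.
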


\newcommand{\T}{\mathsf{T}}
\begin{proof}
    Let $n$ be a sufficiently large integer and
    $\T$ be a $d$\nobreakdash-dimensional torus graph with $n$~vertices
    and side length $\oldsqrt[d]{n}\in\N$.
    Let $B_{k}(u) := \linebreak[0]\{ v \in V \colon \|v-u\|_\infty \leq k \}$ and
    $\partial B_{k}(u) := \{ v \in V \colon \|v-u\|_\infty = k \}$.
    For every vertex $v \in V(\T)$, we define
    $|B_{\ell/2}(v)| = \ell^d = (\log n)^{1/4} $
    with $\ell:=(\log n)^{1/(4d)}$, where we assume w.l.o.g. that $\ell$ is an odd integer.
    For $\ell':=(\log n)^{2/(3d)}$, define a set $S \subseteq V$ by
    \[
		S := \bigl \{
		( x_1 \, \ell', x_2 \, \ell', \ldots, x_{d} \, \ell')
		\, \bigm\vert \,
        1 \leq x_1,x_2, \ldots,x_d < \oldsqrt[d]{n}/\ell' - 1
        \bigr\},
	\]
	that is, every pair of distinct vertices in $S$ has a coordinate-wise
	distance which is a multiple of $\ell'$.
	Note that $|S|=\Omega(n/\ell'^{d})$.
    Define the initial load vector as
    $x_i^{(0)} = \xi_i^{(0)} := 2d \cdot \max\{0, \ell/2 - \dist(i,S) \}$,
    $i\in V$.
    Clearly, the initial discrepancy is $K=2d \cdot \ell/2$.

    The idea is now to decompose $\T$ into smaller subgraphs
    centered around $s\in S$, since
    the upper bound on the convergence rate given by
    \thmref{ideal} has a strong dependence on the size of the graph.
    Then we relate the simultaneous convergence on each of the
    smaller graphs to the convergence on the original graph. An illustration of
    our decomposition of $\T$ can be found in \figref{torus}.

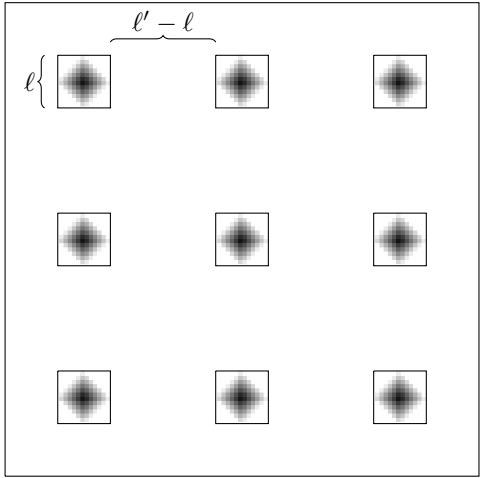
\begin{figure}[tb]
\center
\begin{tikzpicture}[auto,domain=0:4,x=1cm,y=1cm]
\pgftransformxscale{.7}
\pgftransformyscale{.7}

\draw (0,1) -- (9,1) -- (9,10) -- (0,10) -- (0,1);

\draw[snake=brace] (0.75,8) -- (0.75,9);
\draw[snake=brace] (2,9.25) -- (4,9.25);
\draw(0.75,8.5) node[left] {$\ell$};
\draw(3,9.25) node[above] {$\ell' - \ell$};

\setcounter{xcounter}{0}
\foreach \x  in {1.08,1.16,...,2}
\foreach \y  in {2.08,2.16,...,3}
{
\pgfmathsetcounter{cntShader}{100-190*max(\x-1.5,1.5-\x)-190*max(\y-2.5,2.5-\y)}
\filldraw [fill=\couleur!\thecntShader, color=\couleur!\thecntShader, line width=0pt] (\x-0.04,\y-0.04) rectangle (\x+0.04,\y+0.04);
}

\setcounter{xcounter}{0}
\foreach \x  in {1.08,1.16,...,2}
\foreach \y  in {5.08,5.16,...,6}
{
\pgfmathsetcounter{cntShader}{100-190*max(\x-1.5,1.5-\x)-190*max(\y-5.5,5.5-\y)}
\filldraw [fill=\couleur!\thecntShader, color=\couleur!\thecntShader, line width=0pt] (\x-0.04,\y-0.04) rectangle (\x+0.04,\y+0.04);
}

\setcounter{xcounter}{0}
\foreach \x  in {1.08,1.16,...,2}
\foreach \y  in {8.08,8.16,...,9}
{
\pgfmathsetcounter{cntShader}{100-190*max(\x-1.5,1.5-\x)-190*max(\y-8.5,8.5-\y)}
\filldraw [fill=\couleur!\thecntShader, color=\couleur!\thecntShader, line width=0pt] (\x-0.04,\y-0.04) rectangle (\x+0.04,\y+0.04);
}

\setcounter{xcounter}{0}
\foreach \x  in {4.08,4.16,...,5}
\foreach \y  in {2.08,2.16,...,3}
{
\pgfmathsetcounter{cntShader}{100-190*max(\x-4.5,4.5-\x)-190*max(\y-2.5,2.5-\y)}
\filldraw [fill=\couleur!\thecntShader, color=\couleur!\thecntShader, line width=0pt] (\x-0.04,\y-0.04) rectangle (\x+0.04,\y+0.04);
}

\setcounter{xcounter}{0}
\foreach \x  in {4.08,4.16,...,5}
\foreach \y  in {5.08,5.16,...,6}
{
\pgfmathsetcounter{cntShader}{100-190*max(\x-4.5,4.5-\x)-190*max(\y-5.5,5.5-\y)}
\filldraw [fill=\couleur!\thecntShader, color=\couleur!\thecntShader, line width=0pt] (\x-0.04,\y-0.04) rectangle (\x+0.04,\y+0.04);
}

\setcounter{xcounter}{0}
\foreach \x  in {4.08,4.16,...,5}
\foreach \y  in {8.08,8.16,...,9}
{
\pgfmathsetcounter{cntShader}{100-190*max(\x-4.5,4.5-\x)-190*max(\y-8.5,8.5-\y)}
\filldraw [fill=\couleur!\thecntShader, color=\couleur!\thecntShader, line width=0pt] (\x-0.04,\y-0.04) rectangle (\x+0.04,\y+0.04);
}

\setcounter{xcounter}{0}
\foreach \x  in {7.08,7.16,...,8}
\foreach \y  in {2.08,2.16,...,3}
{
\pgfmathsetcounter{cntShader}{100-190*max(\x-7.5,7.5-\x)-190*max(\y-2.5,2.5-\y)}
\filldraw [fill=\couleur!\thecntShader, color=\couleur!\thecntShader, line width=0pt] (\x-0.04,\y-0.04) rectangle (\x+0.04,\y+0.04);
}

\setcounter{xcounter}{0}
\foreach \x  in {7.08,7.16,...,8}
\foreach \y  in {5.08,5.16,...,6}
{
\pgfmathsetcounter{cntShader}{100-190*max(\x-7.5,7.5-\x)-190*max(\y-5.5,5.5-\y)}
\filldraw [fill=\couleur!\thecntShader, color=\couleur!\thecntShader, line width=0pt] (\x-0.04,\y-0.04) rectangle (\x+0.04,\y+0.04);
}

\setcounter{xcounter}{0}
\foreach \x  in {7.08,7.16,...,8}
\foreach \y  in {8.08,8.16,...,9}
{
\pgfmathsetcounter{cntShader}{100-190*max(\x-7.5,7.5-\x)-190*max(\y-8.5,8.5-\y)}
\filldraw [fill=\couleur!\thecntShader, color=\couleur!\thecntShader, line width=0pt] (\x-0.04,\y-0.04) rectangle (\x+0.04,\y+0.04);
}

\draw (1,9) -- (2,9) -- (2,8) -- (1,8) -- (1,9);
\draw (4,9) -- (5,9) -- (5,8) -- (4,8) -- (4,9);
\draw (1,6) -- (2,6) -- (2,5) -- (1,5) -- (1,6);
\draw (4,6) -- (5,6) -- (5,5) -- (4,5) -- (4,6);
\draw  (7,9) -- (8,9) -- (8,8) -- (7,8) -- (7,9);
\draw  (7,6) -- (8,6) -- (8,5) -- (7,5) -- (7,6);
\draw  (1,3) -- (2,3) -- (2,2) -- (1,2) -- (1,3);
\draw  (4,3) -- (5,3) -- (5,2) -- (4,2) -- (4,3);
\draw  (7,3) -- (8,3) -- (8,2) -- (7,2) -- (7,3);

\end{tikzpicture}
\caption{Overview of the decomposition of $\T$ into various $\T'(s)$ for the two-dimensional case $d=2$. The inner rectangles represent the various smaller grids $\T'(s)$ with $s \in S$. The darkness indicates the amount of the initial load. Note that the initial load of vertices outside the $\T'(s)$'s is~$0$.} \label{fig:torus}
\end{figure}

    Fix some $s\in S$.  Then the subgraph
    induced by the vertices $B_{\ell/2}(s)$
    is a $d$\nobreakdash-dimensional grid with exactly $n':=(\log n)^{1/4}$
    vertices.
    Let $\T'=\T'(s)$ denote the corresponding $d$\nobreakdash-dimensional torus graph with
    the same vertices, but additional
    wrap-around edges between vertices of
    $\partial B_{\ell/2}(s)$.
    W.l.o.g.\ we assume that the side length $\oldsqrt[d]{n}$ of $T$ is a multiple
    of the side length $\ell$
    of $\T'(s)$.
    Let $\bP'$ be the diffusion matrix of $\T'(s)$.

    Let us denote by $\xi'^{(0)}$ $(x'^{(0)})$ the projection
    of the load vector $\xi^{(0)}$ $(x^{(0)})$ from $\T$ onto $\T'(s)$.
    By \corref{ideal}, the idealized process reduces the discrepancy on $\T'(s)$
    from~$K=(\log n)^{1/(4d)}/2$ to~$1$ within
    $t_0 := \Oh( (n')^{2/d}\,\log (K n') )
    = \Oh( \log \log (n) \, (\log n)^{1/(2d)})$ time steps.
    We now want to argue that this also happens on the original graph $\T$ with $n$~vertices.
    Note that the convergence of the idealized process on $\T'(s)$ implies
    \begin{equation}
      \|\xi'^{(t_0)} - \overline{ \xi'} \|_{\infty}
      = \| {\bP'}^{t_0} \xi'^{(0)}  -\overline{ \xi'} \|_{\infty} \leq 1. \label{eq:firstt}
    \end{equation}
    Furthermore, note that the average load $\overline{\xi'}$ in each $\T'(s)$ satisfies
    \[
      \overline{\xi'} \leq 2d \cdot \ell/4.
    \]

    \noindent
    Our next observation is that for any two vertices $u,v \in \T'(s)$,
    \begin{align}
       \bP_{u,v}^{t_0} &\leq {\bP'}_{u,v}^{t_0} \label{eq:secondd}
    \end{align}
    as a random walk on $\T'(s)$ can be expressed as a projection of a random
    walk on $\T$ (by assigning each vertex in $\T'(s)$ to a set of vertices
    in $\T$). With the observations
    \begin{itemize}
    \setlength{\itemsep}{0pt}
    \setlength{\parskip}{0pt}
       \item for $v\in \T'(s)$:  ${\xi}_v^{(0)}={\xi'_v}^{(0)}$,
       \item for $v \in \T$ and $\ell/2 \leq \dist(v,s) \leq t_0$:
             $\xi_v^{(0)}=0$
             (as $t_0=o(\ell'-\ell/2))$,
       \item for $v\in \T$ and $\dist(v,s)>t_0$:  $\bP_{v,s}^{t_0}=0$,
    \end{itemize}
    we obtain for any vertex $v \in  B_{\ell/2}(s)$,
    \[
         \xi_{s}^{(t_0)}
         = \big( \bP^{(t_0)} \cdot \xi^{(0)} \big)_s
         = \sum_{v \in \T} \xi_v^{(0)} \bP_{v,s}^{(t_0)}
         = \sum_{v \in \T'(s)} {\xi'_v}^{(0)} \bP_{v,s}^{(t_0)}.
    \]
    By first applying \eq{secondd} and then \eq{firstt}, we get
    \[
     \xi_{s}^{(t_0)}
     \leq  \sum_{v \in \T'(s)} {\xi'_s}^{(0)} {\bP'}_{v,s}^{(t_0)}
     = {\xi'_s}^{(t_0)} \leq \overline{ \xi'} + 1.
    \]

    \noindent
    This means that after $t_0$ time steps, the idealized process achieves
    a good balancing at $s$.
    On the other hand, the discrete process may fail within $t_0$ time steps if
    there is an~$s$ such that all edges in $\T'(s)$ round towards $s$ at all
    time steps $t\leq t_0$. (Note that by construction, no load from another
    $\T'(s')$, $s' \in S \setminus \{s\}$, can reach $\T'(s)$ within $t_0$ steps,
    since the distance between any vertex in $\T'(s)$ and $\T'(s')$ is $\ell' -
    2 \ell \geq t_0$.)
    Moreover, by definition of $x^{(0)}$, $|x^{(0)}_u -
    x^{(0)}_v|\in \{0, 2d \}$   if $\{u,v\}\in E(\T)$. Hence the fractional flow in
    the first step is $\in \{0, \frac{1}{2} \}$
    and for fixed $s$
    the probability that $x^{(0)}_u=x^{(1)}_u$ for all $u\in \T'(s)$ is
    at least
    $2^{ -|B_{\ell/2}(s)| }$.
    By induction, for fixed $s$ the probability that $x^{(0)}_u=x^{(t_0)}_u$
    holds for all $u\in \T'(s)$ is at least
    \[
        2^{ -|B_{\ell/2}(s)| \, t_0}
        = 2^{ -(\log n)^{1/4} \cdot \Oh( \log \log (n) \, (\log n)^{1/(2d)} )} \geq 2^{-(\log n)^{4/5}} .
    \]
    As we have $|S| = \Omega(n/\ell'^{d})=\Omega(\poly(n))$
    independent events, it follows that
    there is at least one $s \in S$ with $ x_{s}^{(t_0)} = x_s^{(0)} = \ell/2 \cdot 2d$
    with probability
    \[
         1 - \Big(1 - 2^{-(\log n)^{4/5}} \Big)^{\Omega(\poly{(n)})} \geq 1- n^{-C},
    \] where $C > 0$ is some constant.
    If this happens, then the deviation between the discrete and idealized process at vertex $s \in S$ at step $t_0$ is at least
    \[
     \big|   x_{s}^{(t_0)} - \xi_{s}^{(t_0)} \big|
     \geq \big| 2d \cdot \ell/2 - (2d \cdot \ell/4 + 1) \big|
     = \Omega( (\log n)^{1/(4d)}),
    \]
 and the claim follows.
\end{proof}

\section{Conclusions}

We propose and analyze a new deterministic algorithm for balancing
indivisible tokens.  By achieving a constant discrepancy in optimal time
on all torus graphs, our algorithm improves upon all previous deterministic and random
approaches with respect to both running time and discrepancy.
For hypercubes we prove a discrepancy of $\Theta(\log n)$, which
is also significantly better than the (deterministic)
RSW algorithm, which achieves a discrepancy of $\Omega(\log^2 n)$.

On a concrete level, it would be interesting to extend these results
to other network topologies. From a higher-level perspective, our new algorithm provides
a striking example of quasirandomness in algorithmics. Devising and analyzing
similar algorithms for other tasks such as routing, scheduling or synchronization
remains an interesting open problem.

\appendix

\section{Proof of a supplementary result}

\noindent
In order to prove \proref{monotonicity}, we first note the following elementary lemma.
\begin{lem}
\label{lem:basiclemma2}
    Let $(a_k)_{k=1}^{d}, (b_k)_{k=1}^{d}, (c_k)_{k=1}^{d}$ be three positive sequences such that
    \begin{enumerate}
        \item for all $j \in [1,d]$, $\sum_{k=j}^d a_k \leq \sum_{k=j}^d b_k$,
        \item $c_k$ is monotone increasing in $k$.
    \end{enumerate}
    Then for all $j \in [1,d]$,
    $
      \sum_{k=j}^d a_k \cdot c_k \leq \sum_{k=j}^d b_k \cdot c_k.
    $
\end{lem}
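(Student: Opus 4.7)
The plan is a straightforward Abel summation (summation-by-parts) argument applied to the difference sequence $d_k := b_k - a_k$. The hypothesis on partial tail sums says exactly that the tail sums $S_j := \sum_{k=j}^{d} d_k$ are all non-negative for $1 \leq j \leq d$ (and we set $S_{d+1}:=0$). Using $d_k = S_k - S_{k+1}$ and rearranging, we get
\[
    \sum_{k=j}^{d} d_k \, c_k \;=\; \sum_{k=j}^{d} (S_k - S_{k+1}) \, c_k \;=\; S_j \, c_j \,+\, \sum_{k=j+1}^{d} S_k \, (c_k - c_{k-1}),
\]
where the boundary term $-S_{d+1} c_d$ vanishes. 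Since $c_j > 0$, $S_j \geq 0$, and by monotonicity $c_k - c_{k-1} \geq 0$ together with $S_k \geq 0$ for every $k \geq j$, every summand on the right is non-negative. Rewriting the inequality $\sum_{k=j}^{d} d_k c_k \geq 0$ as $\sum_{k=j}^{d} a_k c_k \leq \sum_{k=j}^{d} b_k c_k$ yields the claim.

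There is no real obstacle: the only thing to be careful about is the boundary bookkeeping in the Abel transformation (getting the $S_j c_j$ term and the $-S_{d+1} c_d = 0$ term correct), and noting that one really only needs the tail-sum hypothesis for indices $j' \geq j$, which is implied by the stated hypothesis for all $j' \in [1,d]$. Positivity of the $c_k$ is used only to ensure the first term $S_j c_j$ is non-negative; in fact it would suffice that $c_j \geq 0$.
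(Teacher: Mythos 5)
Your proof is correct: with $S_j=\sum_{k=j}^{d}(b_k-a_k)\geq 0$ and $S_{d+1}=0$, the Abel transformation
\[
\sum_{k=j}^{d}(b_k-a_k)c_k \;=\; S_j\,c_j+\sum_{k=j+1}^{d}S_k\,(c_k-c_{k-1})
\]
is exactly right, every summand is non-negative under the hypotheses, and the bookkeeping of the boundary terms is handled properly. This is, however, a genuinely different route from the paper's argument for \lemref{basiclemma2}: the paper proceeds by induction on the number of summands, folding the last two terms into one by replacing $a_{d-1}$ with $a_{d-1}+\frac{c_d}{c_{d-1}}a_d$ (and likewise for $b$), checking that the shortened sequences again satisfy the hypotheses, and then invoking the induction hypothesis. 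Your summation-by-parts argument is shorter, avoids the induction and the index bookkeeping of the modified sequences, and makes the minimal hypotheses transparent — as you note, only $c_j\geq 0$ together with the monotone differences $c_k-c_{k-1}\geq 0$ is needed, whereas the paper's proof uses strict positivity of $c_{d-1}$ to form the ratio $c_d/c_{d-1}$. It is also methodologically closer to the paper's own proof of \lemref{betragkleinerk}, which uses the same rearrange-the-tail-sums idea; the paper simply did not reuse that device here. Either argument is acceptable; yours is the more economical of the two.
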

\begin{proof}
Define $i:=d-j+1$. We will show that
\begin{align}
   \sum_{k=d-i+1}^{d} a_k \cdot c_k &\leq \sum_{k=d-i+1}^{d} b_k \cdot c_k
   \label{e:indhypoapp}
\end{align}
for all $i\in[1,d]$.
Our proof is by induction on the number of summands $i \in [1,d]$. The claim is trivial
for $i=1$ and $i=2$.
Assume inductively that \eqref{e:indhypoapp} holds for all $i\ge i'$ and for all sequences satisfying the
conditions of the lemma. We will show the claim for $i=i'+1$, i.e.,
\begin{align*}
   \sum_{k=d-i'}^{d} a_k \cdot c_k &\leq \sum_{k=d-i'}^{d} b_k\cdot c_k.
\end{align*}
Define two shorter sequences $(a_k')_{k=1}^{d-1}$ and $(b_k')_{k=1}^{d-1}$
 as follows:
\begin{itemize}
\item
 $a'_{k} = a_{k}$ for $k < d-1$ and  $a'_{d-1} := a_{d-1} + \frac{c_d}{c_{d-1}} a_{d}$
\item
$b'_{k} = b_{k}$ for $k < d-1$ and  $b'_{d-1} := b_{d-1} + \frac{c_d}{c_{d-1}} b_{d}$
\end{itemize}
We will show that the sequences $(a'_k)_{k=1}^{d-1}, (b'_k)_{k=1}^{d-1}, (c_k)_{k=1}^{d-1}$
satisfy the conditions of the lemma. Since $(c_k)_k$ remained unchanged it suffices to show
that
 for all $j' \in [1,d-1]$,
 \[
    \sum_{k=j'}^{d-1} a'_k \leq \sum_{k=j'}^{d-1} b'_k,
 \]
or equivalently (using the definition of $a'_k$ and $b'_k$)
 \begin{align*}
    \sum_{k=j'}^{d} a_k + \left(\frac{c_d}{c_{d-1}}-1\right) a_{d}
    &\leq \sum_{k=j'}^{d} b_k + \left(\frac{c_d}{c_{d-1}}-1\right) b_{d}.
 \end{align*}
 By the first assumption of the lemma, we have
  \[
   \sum_{k=j'}^{d} a_k \leq \sum_{k=j'}^{d} b_k.
 \]
 Moreover, since $a_d \leq b_d$ and
 $\frac{c_d}{c_{d-1}} \geq 1$, we have
 \[
    \left(\frac{c_d}{c_{d-1}}-1\right) a_{d} \le \left(\frac{c_d}{c_{d-1}}-1\right) b_{d}.
 \]
 Thus,  $(a'_k)_{k=1}^{d-1}, (b'_k)_{k=1}^{d-1}, (c_k)_{k=1}^{d-1}$ satisfy the conditions of the
 lemma.
 By the induction hypothesis
 on those sequences and for $i'$ summands,  we have
\begin{align*}
   \sum_{k=d-i'}^{d-1} a'_k \cdot c_k &\leq \sum_{k=d-i'}^{d-1} b'_k\cdot c_k.
\end{align*}
Plugging in the definition of $a'_k$ and $b'_k$, we finally obtain
 \begin{align*}
  \sum_{k=d-i'}^{d-2} a_k \cdot c_k + c_{d-1} \, \left( a_{d-1} + \frac{c_d}{c_{d-1}} a_{d}  \right)
  &\leq
  \sum_{k=d-i'}^{d-2} b_k \cdot c_k + c_{d-1} \, \left( b_{d-1} + \frac{c_d}{c_{d-1}} b_{d}  \right),
 \end{align*}
 which is precisely the induction claim for $i'+1$. The lemma follows.
\end{proof}

\againpronofrom{monotonicity}{\textpromonotonicity}
\begin{proof}
    Fix an arbitrary step $t \in \N$.
    By symmetry, it suffices to prove that $\bP_{0,x}(t) \leq
    \bP_{0,x}(t+1)$ where $x\in\{0,1\}^d$ with $|x|\geq d/2$.
    First note that for all $j \in [|x|,d]$,
    $\sum_{k=j}^{d} \simplePr[$exactly $k$ coordinates chosen in $t$ steps$]\leq \sum_{k=j}^{d} \simplePr[$exactly $k$ coordinates chosen in $t+1$ steps$]$
    since the distribution of chosen coordinates after $t+1$ steps clearly dominates the distribution of chosen coordinates after $t$ steps.
    Observe that for any $|x| \geq d/2$ the function
    $
     f(k) := 2^{-k} \, \binom{d-|x|}{k-|x|} \big/ \binom{d}{k}
    $
    is monotone increasing in $|x| \leq k \leq d$.
    This can be verified by showing that $f(k)/f(k-1) \geq 1$ for any $k$ with $|x| < k \leq d$.
    This allows us to apply \lemref{basiclemma2} to
    \eq{balls},
    giving $\bP_{0,x}(t) \leq \bP_{0,x}(t+1)$.
    Hence the proposition follows.
\end{proof}

\end{document}